\newtheorem{theorem}{Theorem}[section]
\newtheorem{lemma}[theorem]{Lemma}
\newtheorem{definition}[theorem]{Definition}
\newcommand{\sdotfill}{\textcolor[rgb]{0.8,0.8,0.8}{\dotfill}} 
\begin{document}

\title{Efficient Quantum Circuits for Machine Learning Activation Functions including Constant T-depth ReLU}

\author{Wei Zi$^{1,2}$}
\author{Siyi Wang$^{3}$}
\author{Hyunji Kim$^{4}$}
\author{Xiaoming Sun$^{1,2,}\! \!$}
\email{sunxiaoming@ict.ac.cn}
\author{Anupam Chattopadhyay$^{3,}\! \!$}
\email{anupam@ntu.edu.sg}
\author{Patrick Rebentrost$^{5,6,}$}
\email{cqtfpr@nus.edu.sg}
\affiliation{$^{1}$State Key Lab of Processors, Institute of Computing Technology, Chinese Academy of Sciences, Beijing, China}
\affiliation{$^{2}$School of Computer Science and Technology, University of Chinese Academy of Sciences, Beijing, China}
\affiliation{$^{3}$School of Computer Science and Engineering, Nanyang Technological University, Singapore}
\affiliation{$^{4}$Hansung University, Seoul, South Korea}
\affiliation{$^{5}$Centre for Quantum Technologies, National University of Singapore, Singapore}
\affiliation{$^{6}$School of Computing, National University of Singapore, Singapore}

\begin{abstract}
In recent years, Quantum Machine Learning (QML) has increasingly captured the interest of researchers. Among the components in this domain, activation functions hold a fundamental and indispensable role. Our research focuses on the development of activation functions quantum circuits for integration into fault-tolerant quantum computing architectures, with an emphasis on minimizing $T$-depth.
Specifically, we present novel implementations of ReLU and leaky ReLU activation functions, achieving constant $T$-depths of 4 and 8, respectively. 
Leveraging quantum lookup tables, we extend our exploration to other activation functions such as the sigmoid. This approach enables us to customize precision and $T$-depth by adjusting the number of qubits, making our results more adaptable to various application scenarios. 
This study represents a significant advancement towards enhancing the practicality and application of quantum machine learning. 
\end{abstract}

\maketitle


\section{Introduction}
Currently, Machine Learning (ML) is attracting a lot of attention, driven by its widespread adoption across various industries. Recent breakthroughs in natural language processing models, exemplified by GPT4~\cite{achiam2023gpt}, as well as advancements in image classification and generation \cite{affonso2017deep,nie2018medical,yi2018sharpness,li2019deep}, highlight the rapid advancements of this field. An important building block of machine learning architectures is feed-forward neural networks \cite{gurney2018introduction}, which are collections of layers of activation functions and linear transformations. 
In previous research, many different activation functions have been proposed, such as Sigmoid, Softmax, and ReLU, which are successfully utilized in ML applications.

Quantum machine learning~\cite{schuld2015introduction} has shown progress in both adapting common machine learning algorithms to run on future quantum computers and using machine learning techniques to learn about quantum systems \cite{peral2024systematic}. Some examples include Quantum Neural Networks (QNN) \cite{abbas2021power}, quantum support vector machines (QSVM) \cite{rebentrost2014quantum}, quantum principal component analysis (QPCA) \cite{lloyd2014quantum}, variational quantum eigensolver (VQE) \cite{tilly2022variational}, parameterized quantum circuits (PQC) \cite{benedetti2019parameterized}, and variations of the quantum approximate optimization algorithm (QAOA) \cite{zhou2020quantum}.

\begin{table*}
\centering
\caption{The $T$-depth of the quantum circuit implementing ReLU in both scenarios: with and without the constraint of qubit connectivity limited to a 2D grid. The $T$-depth for Leaky ReLU is also presented.}
\bgroup
\renewcommand{\arraystretch}{1.3}
\label{table:relu}
\centering
\resizebox{0.6\textwidth}{!}{
\begin{tabular}{c|c|c|c}
\hline
Activation function & ReLU & ReLU (on 2D grid) & Leaky ReLU \\ \hline
$T$-depth & 4 & 4 & 8 \\ \hline

\end{tabular}
}
\egroup
\end{table*}

Recently, numerous research findings on QNNs based on quantum circuit implementations have been proposed \cite{schuld2014quest,wan2017quantum,zhao2019building,tacchino2019artificial,cong2019quantum}. It turns out that the implementation of activation functions, using quantum circuits, is fundamental and crucial in QNNs. Various methods have been utilized to implement activation functions, including quantum phase estimation~\cite{schuld2015simulating,de2019implementing}, Taylor series expansion \cite{maronese2022quantum}, quantum oracle synthesis \cite{yan2020nonlinear}, and polynomial networks \cite{livni2014computational}. However, to the best of the authors' knowledge, there are no dedicated circuit designs for implementing activation functions with a focus on fault tolerance so far. In this work, we address this gap by constructing a quantum circuit using Clifford+$T$ gates. We specifically focus on minimizing the $T$-depth of the circuits, considering the high cost associated with fault-tolerant implementations of the $T$ gate and the limitation imposed by the coherence time of the quantum device \cite{gottesman2010introduction}. In particular, for some activation functions that have a high implementation complexity, such as the ReLU \cite{sharma2017activation}, approximating them using polynomials would result in high polynomial degrees \cite{goel2019learning}. 
We have specifically designed quantum circuits to implement them, significantly reducing the $T$-depth.

In particular, we propose a quantum circuit to implement the ReLU function with a constant $T$-depth of $4$, without using ancillary qubits. If the qubit connectivity is constrained to a 2D grid, the $T$-depth of the circuit for the ReLU function remains unchanged. Additionally, we implement Leaky ReLU functions using a quantum circuit with a constant $T$-depth of $8$. The results of ReLU and Leaky ReLU are summarized in Table~\ref{table:relu}.

For other activation functions, such as Sigmoid, SoftMax, Tanh, Swish, Exponential Linear Unit (ELU) \cite{sharma2017activation}, and Gaussian Error Linear Unit (GELU) \cite{hendrycks2016gaussian}, we utilize the quantum Lookup table (QLUT) \cite{krishnakumar2022aq} to implement them. In our implementation, we considered the trade-off between the number of qubits and implementation accuracy, as well as the trade-off between $T$-depth and ancilla count. Specifically, we represent the input and output of these activation functions using well-known floating-point numbers with $n \in \{8,16,32,64,128\}$ bits. The QLUT-based implementation method ensures that the representation accuracy of the floating-point numbers solely determines the implementation error. Furthermore, QLUT allows us to reduce the $T$-depth of the circuit by increasing the ancilla count. The detailed discussion is presented in Section.~\ref{sec:QLUT}.
The Qiskit implementation of quantum circuits for activation functions discussed in this paper is open-sourced on GitHub~\footnote{\url{https://github.com/ziwei1998/quantum-circuit-for-activation-functions}}.

The rest of the paper is organized as follows. Section \ref{sec:pre} presents preliminaries such as activation functions. Sections \ref{sec:relu} and 
\ref{sec:leakyrelu} focus on the quantum circuit implementations of ReLU and Leaky ReLU, respectively. Section \ref{sec:QLUT} explains how to utilize QLUT to implement other activation functions such as Sigmoid function. Finally, Section \ref{sec:concl} provides a conclusive summary of the entire paper.

\section{Preliminary}
\label{sec:pre}

Activation functions are integral components in machine learning and deep learning, providing the crucial non-linearity required for effective learning. In the classical world, several activation functions are commonly used including ReLU, Leaky ReLU, Sigmoid, SoftMax, Tanh, Swish, Exponential Linear Unit (ELU), Gaussian Linear Unit (GELU)~\cite{sharma2017activation,hendrycks2016gaussian}.
Specifically, The formulas for ReLU, Leaky ReLU, and sigmoid are as follows.

\begin{itemize}
    \item \textbf{ReLU:} $f(x) = \max(x,0)$.
    \item \textbf{Leaky ReLU ($0 < \alpha < 1$):}
    
    $
    f(x) = \max(x,\alpha x) =  \left\{
    \begin{array}{cc}
        x & x \geq 0 \\
        \alpha x & x < 0
    \end{array}
    \right.
    $.
    \item \textbf{Sigmoid:} ${f(x)} = 1/(1+e^{-x})$.
\end{itemize}

For a quantum circuit $U_f$ that can implement function $f$, it is defined as:
\begin{equation*}
U_f\ket{x} \ket{0}_{o} = \ket{x} \ket{f(x)}.
\end{equation*}

Here, $\ket{x}$ represents the quantum state that encodes the input of function $f$, $\ket{0}_o$ represents the output qubits used to store the result $f(x)$, and $\ket{f(x)}$ represents the quantum state that encodes the output $f(x)$. The main objective of this paper is to construct $U_f$ for various activation functions using Clifford gates and $T$ gates. Specifically, we utilize the quantum gates from the set $\{CNOT, H, S, T\}$ to build the circuits. The commonly used $X$ gate can be decomposed into four Clifford gates: $X = HSSH$. Due to the higher fault-tolerant implementation cost of the $T$ gate compared to Clifford gates, our focus lies in minimizing the $T$-depth, which represents the circuit's runtime \cite{gottesman2010introduction}. 
In our construction, three types of ancillary qubits may be utilized. Clean ancillary qubits are qubits initialized to $\ket{0}$ and reset to $\ket{0}$ after their use. Dirty ancillary qubits do not require the initial state but need to be reset after their use. Garbage qubits are initialized to $\ket{0}$ and do not need to be reset after use.

\section{Quantum Circuit Implementation of ReLU Function}
\label{sec:relu}

\begin{definition}
    A circuit $C_n$ is capable of implementing the $n$-bit ReLU function, is defined as:
    \begin{equation*}
        C_n\ket{\boldsymbol{x}}\ket{0}_{n-1} = \ket{\boldsymbol{x}}\ket{\max(\boldsymbol{x},0)},
    \end{equation*}
    where $\boldsymbol{x} \in \{0,1\}^n$ represents a fixed-point number, and the circuit $C_n$ is composed of Clifford+$T$ gates.
\end{definition}

Note that the input $\boldsymbol{x} \in \{0,1 \}^n$ represents a fixed-point integer, where the most significant bit (MSB) corresponds to the sign bit. Consequently, the output $\max(\boldsymbol{x},0)$ requires only $n-1$ bits for storage. To achieve this, we utilize $n-1$ output qubits initialized to $\ket{0}$. Thus, a total of $2n-1$ qubits are required to implement the ReLU function.

To understand why this number is necessary, consider $2^{n-1} + 1$ non-positive inputs. For all of these inputs, the output is $0$. Storing this output value of $0$ necessitates $n-1$ qubits. Moreover, to maintain reversibility in the presence of $2^{n-1} + 1$ distinct inputs yielding the same output $0$, an additional $n$ qubits are needed. Consequently, $2n-1$ qubits are both necessary and sufficient for this purpose.

To implement the ReLU function, it suffices to copy the remaining $n-1$ input qubits to the output qubits when the sign bit is $0$. This statement remains valid irrespective of whether we represent the input $\boldsymbol{x}$ using the 2's complement, the 1's complement, or the true form. Subsequently, we will demonstrate the implementation of the ReLU function using two architectures. The first architecture permits the application of the CNOT gate to any two qubits, while the second architecture restricts the application of the CNOT gate to qubits that are adjacent on a 2D grid.

\subsection{Implementation of ReLU Function without Connectivity Constraints}

The quantum fan-out gate $F_n$ is defined as $F_n\ket{a,b_1,\dots,b_n} = \ket{a,b_1 \oplus a, \dots, b_n \oplus a}$. Its well-known $O(\log n)$ depth implementation has been widely utilized \cite{takahashi2021power}, therefore we omit the proof of the lemma~\ref{lem:fanout}. For illustration, the implementation of $F_8$ is shown in Fig.~\ref{fig:F_8}.

\begin{lemma}
\label{lem:fanout}
    The quantum fan-out gate $F_n$ can be implemented by a quantum circuit with depth $O(\log n)$ and size $O(n)$ without ancillary qubits, using only CNOT gates.
\end{lemma}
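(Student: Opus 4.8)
The plan is to implement $F_n$ by a balanced binary ``copy tree'' of CNOT gates that doubles, in every layer, the number of wires carrying the value $a$. Index the wires by $0,1,\dots,n$, where wire $0$ holds $a$ and wires $1,\dots,n$ hold $b_1,\dots,b_n$; initially only wire $0$ carries $a$. In layer $k=1,2,\dots$, every wire that already carries $a$ is used as the control of one CNOT whose target is a fresh, previously untouched wire. Since before layer $k$ exactly $\min(2^{k-1},n+1)$ wires carry $a$, layer $k$ converts up to $2^{k-1}$ further wires into carriers of $a$, doubling the count. After $\lceil \log_2(n+1)\rceil$ layers all $n$ target wires carry $a$, which is precisely the action of $F_n$; wire $0$ is never used as a target, so $a$ is preserved. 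The $n=8$ construction in Fig.~\ref{fig:F_8} illustrates exactly this doubling pattern.

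Two points need to be checked. First, correctness: within a single layer the chosen controls are distinct wires all holding the identical value $a$, and the chosen targets are distinct wires disjoint from the controls and from one another, so the CNOTs of one layer act on pairwise-disjoint qubit pairs and may be scheduled in a single time step; each such CNOT merely XORs $a$ onto a wire currently holding $0$, setting it to $a$. A straightforward induction on $k$ establishes the invariant ``after layer $k$, the first $\min(2^k,n+1)$ wires hold $a$ and the remaining wires are untouched,'' which at $k=\lceil\log_2(n+1)\rceil$ gives the desired output. Second, the resource counts: each target wire $b_j$ is written into exactly once, so the circuit uses exactly $n$ CNOT gates, giving size $O(n)$; the number of layers is $\lceil\log_2(n+1)\rceil=O(\log n)$, giving depth $O(\log n)$; and no wire beyond the $n+1$ input/output wires is ever used, so no ancillae are required.

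There is essentially no serious obstacle here — the only mild subtlety is the bookkeeping that assigns, in each layer, which carrier-of-$a$ controls which fresh target so that the targets of a layer form a previously-untouched block and the parallel CNOTs stay conflict-free; this is handled cleanly by the $\min(2^k,n+1)$ invariant above. Since this $O(\log n)$-depth fan-out construction is standard \cite{takahashi2021power}, I would cite it and present the doubling picture of Fig.~\ref{fig:F_8} rather than belabor the induction, which is exactly the course the paper takes.
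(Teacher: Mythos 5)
There is a genuine gap: your circuit does not implement the unitary $F_n$ as defined in the lemma. The fan-out gate must map $\ket{a,b_1,\dots,b_n}$ to $\ket{a,b_1\oplus a,\dots,b_n\oplus a}$ for \emph{arbitrary} $b_j$ (and hence on superpositions), but your doubling tree reuses freshly written wires as controls under the assumption that each target ``currently holds $0$.'' For a general input this fails already at the second layer: after $\mathrm{CNOT}(a\to b_1)$ wire $1$ holds $b_1\oplus a$, so using it as a control onto wire $3$ produces $b_3\oplus b_1\oplus a$ rather than $b_3\oplus a$. Your circuit therefore agrees with $F_n$ only on the subspace where all $b_j=0$, which is not enough here: in the shared-control Toffoli construction (Fig.~\ref{fig:4Toffoli}) the fan-out gates act mid-circuit on qubits that have already been hit by Hadamards and CNOTs, so the full unitary is genuinely required.

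The standard fix --- and what Fig.~\ref{fig:F_8} actually depicts, note the mirrored columns on either side of the single central $\mathrm{CNOT}(a\to b_1)$ --- is a conjugation: let $V$ be a depth-$O(\log n)$ binary tree of $n-1$ CNOTs acting only among $b_1,\dots,b_n$; then $F_n = V^{-1}\,\mathrm{CNOT}(a\to b_1)\,V$. The pre-conjugation by $V$ loads each $b_j$ with the parity of the $b$'s on its tree path, the central CNOT injects $a$ at the root, and the un-conjugation cancels all the spurious $b$-parities while distributing $a$ to every target, giving depth $2\lceil\log_2 n\rceil+1$ and size $2n-1$ with no ancillae. (The paper itself does not spell this out --- it cites \cite{takahashi2021power} and offers Fig.~\ref{fig:F_8} as illustration --- but if you do write a proof, it must be of this conjugated form, not the naive copy tree.)
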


\begin{figure}[h]
\centering
\begin{flushleft}\ \ \ \ \ \ \ \ \ \Qcircuit @C=0.85em @R=1em @!R{
    \lstick{\ket{a}} & \ctrl{1} & \qw &&& \qw & \qw & \qw & \ctrl{1} & \qw & \qw & \qw & \qw & \rstick{\ket{a}} \\
    \lstick{\ket{b_1}} & \targ \qwx[1] & \qw &&& \ctrl{1} & \ctrl{2} & \ctrl{4} & \targ & \ctrl{4} & \ctrl{2} & \ctrl{1} & \qw & \rstick{\ket{b_1 \oplus a}} \\
    \lstick{\ket{b_2}} & \targ \qwx[1] & \qw &&& \targ & \qw & \qw & \qw & \qw & \qw & \targ & \qw & \rstick{\ket{b_2 \oplus a}} \\
    \lstick{\ket{b_3}} & \targ \qwx[1] & \qw &&& \ctrl{1} & \targ & \qw & \qw & \qw & \targ & \ctrl{1} & \qw & \rstick{\ket{b_3 \oplus a}} \\
    \lstick{\ket{b_4}} & \targ \qwx[1] & \qw &\push{=}&& \targ & \qw & \qw & \qw & \qw & \qw & \targ & \qw & \rstick{\ket{b_4 \oplus a}} \\
    \lstick{\ket{b_5}} & \targ \qwx[1] & \qw &&& \ctrl{1} & \ctrl{2} & \targ & \qw & \targ & \ctrl{2} & \ctrl{1} & \qw & \rstick{\ket{b_5 \oplus a}} \\
    \lstick{\ket{b_6}} & \targ \qwx[1] & \qw &&& \targ & \qw & \qw & \qw & \qw & \qw & \targ & \qw & \rstick{\ket{b_6 \oplus a}} \\
    \lstick{\ket{b_7}} & \targ \qwx[1] & \qw &&& \ctrl{1} & \targ & \qw & \qw & \qw & \targ & \ctrl{1} & \qw & \rstick{\ket{b_7 \oplus a}} \\
    \lstick{\ket{b_8}} & \targ & \qw &&& \targ & \qw & \qw & \qw & \qw & \qw & \targ & \qw & \rstick{\ket{b_8 \oplus a}}
}
\end{flushleft}
\caption{The quantum circuit implementing the $F_8$ gate.}
\label{fig:F_8}
\end{figure}
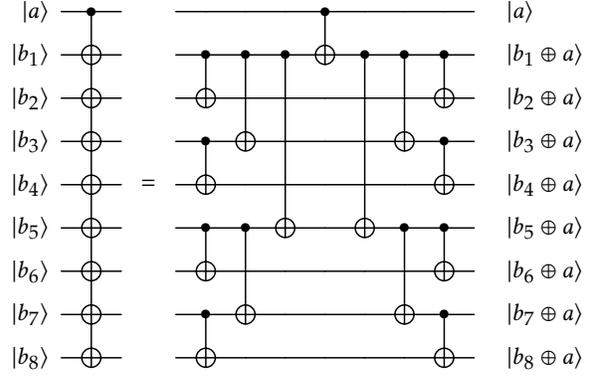

Note that the quantum fan-out gate $F_n$ is composed solely of CNOT gates, resulting in a $T$-depth of $0$.

\begin{theorem}
    \label{the:ReLU}
    There is a quantum circuit $C_n$ capable of implementing the $n$-bit ReLU function with a constant $T$-depth of $4$ without ancillary qubits. Furthermore, the circuit depth is $O(\log n)$, and the circuit size is $O(n)$. 
\end{theorem}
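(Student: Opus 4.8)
The plan is to reduce the ReLU circuit to a single layer of $n-1$ Toffoli gates sharing one control, and then to implement that layer in constant $T$-depth by passing to a diagonal (phase-polynomial) form. As already observed in the text, $\max(\boldsymbol{x},0)$ is obtained by copying each data bit $x_i$ ($i=1,\dots,n-1$) onto the fresh output qubit $o_i$ exactly when the sign bit, call it $s$, equals $0$. Since each $o_i$ starts in $\ket{0}$, the whole transformation is $o_i \mapsto o_i \oplus \bar{s}\,x_i$ for every $i$, i.e.\ the product of $n-1$ Toffoli gates with the common (negatively controlled) wire $s$; these commute since they act on disjoint second-controls $x_i$ and disjoint targets $o_i$, and a pair of $X=HSSH$ gates on the $s$-wire converts the negative control into a positive one at no $T$-cost. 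So it suffices to realize $\prod_{i=1}^{n-1}\mathrm{Toffoli}(s,x_i,o_i)$ with $T$-depth $4$, no ancilla, depth $O(\log n)$, and size $O(n)$.

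First I would conjugate each $o_i$ by a Hadamard, which turns each Toffoli into a doubly controlled $Z$; since the $o_i$ are distinct, this leaves the diagonal operator $D=\prod_{i=1}^{n-1}\mathrm{CCZ}(s,x_i,o_i)$ sandwiched between two $O(1)$-depth Clifford layers. Expanding $(-1)^{abc}$ as a signed sum of $\pi/4$-phases over the linear forms $a,b,c,a\oplus b,a\oplus c,b\oplus c,a\oplus b\oplus c$ yields a phase-polynomial description of $D$: a network of CNOT and fan-out gates, which have $T$-depth $0$ by Lemma~\ref{lem:fanout}, interleaved with layers of $T^{\pm1}$ gates. The distinct linear forms that must be phased are $s$ (with total weight $n-1$, hence realized by a single-qubit gate $T^{n-1}$ on the $s$-wire, which is Clifford plus at most one $T$), $x_i$, $o_i$, $s\oplus x_i$, $s\oplus o_i$, $x_i\oplus o_i$, and $s\oplus x_i\oplus o_i$.

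Next I would pack these seven families into four $T$-layers separated by a constant number of fan-out stages that bring the needed linear forms onto the $x_i$- and $o_i$-wires and later uncompute them: layer $1$ phases $s,x_i,o_i$ on the as-yet-untouched wires; CNOTs $x_i\!\to\!o_i$ and $s\!\to\!x_i$ then expose $x_i\oplus o_i$ and $s\oplus x_i$ for layer $2$; CNOTs $s\!\to\!o_i$ together with an uncomputing $s\!\to\!x_i$ expose $s\oplus x_i\oplus o_i$ for layer $3$; a final $x_i\!\to\!o_i$ exposes $s\oplus o_i$ for layer $4$; and one last CNOT restores every wire. Because all these fan-outs act within the $2n-1$ given wires and are reversed, no ancilla is introduced. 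Undoing the Hadamard conjugation and the two $X$ gates costs only $O(1)$ Clifford depth. Each fan-out stage has depth $O(\log n)$ and size $O(n)$ by Lemma~\ref{lem:fanout}, there are $O(1)$ of them, the intervening CNOT layers are depth $1$, and each of the four $T$-layers has depth $1$; hence the circuit has $T$-depth $4$, total depth $O(\log n)$, size $O(n)$, and uses no ancillary qubits.

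The main obstacle is the scheduling step: verifying that the seven phase families genuinely fit into exactly four $T$-layers (rather than the naive seven), that every transient linear form is correctly uncomputed so the net operator is exactly $D$, and that all $2n-1$ wires end in their intended states. Pinning down the signs and powers of the $T^{\pm1}$ gates and confirming that the $s$-wire phase $T^{n-1}$ really folds into layer $1$ without extra depth is the delicate part; the remaining bookkeeping is routine.
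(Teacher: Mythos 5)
Your proposal is correct and is essentially the paper's own construction: the paper likewise reduces ReLU to $n-1$ Toffoli gates sharing the (negated) sign-bit control, implements them via the standard $T$-count-7 phase-polynomial circuit for the Toffoli arranged in four $T$-layers, replaces the CNOTs emanating from the shared control by $O(\log n)$-depth fan-out gates (Lemma~\ref{lem:fanout}, $T$-depth $0$), and folds the accumulated $T^{n-1}$ on the sign wire into a Clifford plus at most one $T$. Your explicit four-layer schedule of the seven linear forms is exactly the structure of the paper's shared-control Toffoli circuit (Fig.~\ref{fig:4Toffoli}), so the "delicate" scheduling step you flag does go through as you describe.
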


\begin{proof}
    The input $\boldsymbol{x} = x_1x_2\cdots x_n$, consisting of $n$ bits, is stored in $n$ qubits, with the sign bit represented by $x_1$. The $n-1$ initialized output qubits are denoted as $\ket{0}_1,\ket{0}_2,\dots,\ket{0}_{n-1}$. The construction of the quantum circuit $C_n$ proceeds in three steps:
    \begin{itemize}
        \item Firstly, apply an $X$ gate to the qubit $\ket{x_1}$.
        \item The second step involves applying the $n-1$ Toffoli gates. For the $i$-th Toffoli gate, the first control qubit is $\ket{x_1}$, the second control qubit is $\ket{x_{i+1}}$, and the target qubit is $\ket{0}_i$.
        \item Lastly, apply an $X$ gate to the qubit $\ket{x_1}$.
    \end{itemize}
    
    The quantum circuit implementing the ReLU function with input $\boldsymbol{x} \in \{0,1\}^5$ is shown in Fig.~\ref{fig:ReLU}.

    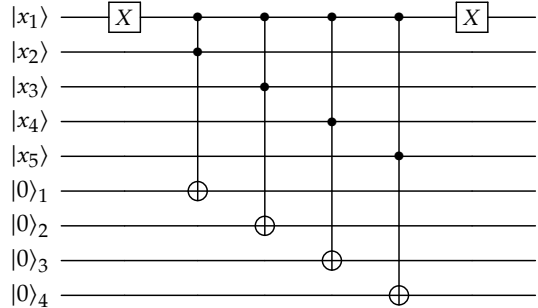
\begin{figure}[h]
    \centering
    \mbox{\Qcircuit @C=2em @R=0.2em @!R{
        \lstick{\ket{x_1}} & \gate{X} & \ctrl{1} & \ctrl{2} & \ctrl{3} & \ctrl{4} & \gate{X} & \qw \\
        \lstick{\ket{x_2}} & \qw & \ctrl{4} & \qw & \qw & \qw & \qw & \qw \\
        \lstick{\ket{x_3}} & \qw & \qw & \ctrl{4} & \qw & \qw & \qw & \qw \\
        \lstick{\ket{x_4}} & \qw & \qw & \qw & \ctrl{4} & \qw & \qw & \qw \\
        \lstick{\ket{x_5}} & \qw & \qw & \qw & \qw & \ctrl{4} & \qw & \qw \\
        \lstick{\ket{0}_1} & \qw & \targ & \qw & \qw & \qw & \qw & \qw \\
        \lstick{\ket{0}_2} & \qw & \qw & \targ & \qw & \qw & \qw & \qw \\
        \lstick{\ket{0}_3} & \qw & \qw & \qw & \targ & \qw & \qw & \qw \\
        \lstick{\ket{0}_4} & \qw & \qw & \qw & \qw & \targ & \qw & \qw
    }}
    \caption{The quantum circuit implementing the ReLU function with input $\boldsymbol{x} \in \{0,1\}^5$ includes input qubits $\ket{x_i}$ and output qubits $\ket{0}_j$.}
    \label{fig:ReLU}
    \end{figure}

        \begin{figure*}
    \centering
    \mbox{\Qcircuit @C=1em @R=1em @!R{
        & \ctrl{1} & \qw &&& \qw & \qw & \qw & \ctrl{2} & \qw & \qw & \qw & \ctrl{2} & \ctrl{1} & \gate{T} & \ctrl{1} & \qw \\
        & \ctrl{1} & \qw &\push{=}&& \qw & \ctrl{1} & \qw & \qw & \qw & \ctrl{1} & \gate{T} & \qw & \targ & \gate{T^{\dagger}} & \targ & \qw \\
        & \targ & \qw &&& \gate{H} & \targ & \gate{T^{\dagger}} & \targ & \gate{T} & \targ & \gate{T^{\dagger}} & \targ & \qw & \gate{T} & \gate{H} & \qw \gategroup{1}{9}{3}{9}{.7em}{--} \gategroup{1}{13}{3}{13}{.7em}{--} \gategroup{1}{14}{2}{14}{.7em}{--} \gategroup{1}{16}{2}{16}{.7em}{--}
    }}
    \caption{The quantum circuit for the Toffoli gate.}
    \label{fig:toffoli}
    \end{figure*}
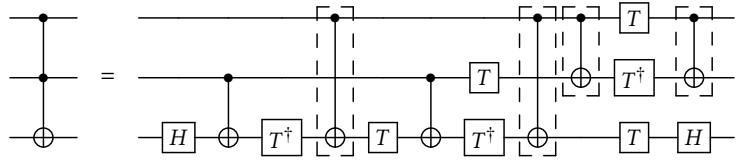

    \begin{figure*}
    \centering
    \mbox{\Qcircuit @C=1em @R=1em @!R{
        \lstick{\ket{x_1}} & \ctrl{1} & \ctrl{3} & \ctrl{5} & \ctrl{7} & \qw &&& \qw & \qw & \qw & \ctrl{2} & \qw & \qw & \qw & \ctrl{2} & \ctrl{1} & \gate{T^4} & \ctrl{1} & \qw & \qw \\
        \lstick{\ket{x_2}} & \ctrl{1} & \qw & \qw & \qw & \qw &&& \qw & \ctrl{1} & \qw & \qw & \qw & \ctrl{1} & \gate{T} & \qw & \targ \qwx[2] & \gate{T^{\dagger}} & \targ \qwx[2] & \qw & \qw \\
        \lstick{\ket{0}_1} & \targ & \qw & \qw & \qw & \qw &&& \gate{H} & \targ & \gate{T^{\dagger}} & \targ \qwx[2] & \gate{T} & \targ & \gate{T^{\dagger}} & \targ \qwx[2] & \qw & \gate{T} & \qw & \gate{H} & \qw \\
        \lstick{\ket{x_3}} & \qw & \ctrl{1} & \qw & \qw & \qw &&& \qw & \ctrl{1} & \qw & \qw & \qw & \ctrl{1} & \gate{T} & \qw & \targ \qwx[2] & \gate{T^{\dagger}} & \targ \qwx[2] & \qw & \qw \\
        \lstick{\ket{0}_2} & \qw & \targ & \qw & \qw & \qw &\push{=}&& \gate{H} & \targ & \gate{T^{\dagger}} & \targ \qwx[2] & \gate{T} & \targ & \gate{T^{\dagger}} & \targ \qwx[2] & \qw & \gate{T} & \qw & \gate{H} & \qw \\
        \lstick{\ket{x_4}} & \qw & \qw & \ctrl{1} & \qw & \qw &&& \qw & \ctrl{1} & \qw & \qw & \qw & \ctrl{1} & \gate{T} & \qw & \targ \qwx[2] & \gate{T^{\dagger}} & \targ \qwx[2] & \qw & \qw  \\
        \lstick{\ket{0}_3} & \qw & \qw & \targ & \qw & \qw &&& \gate{H} & \targ & \gate{T^{\dagger}} & \targ \qwx[2] & \gate{T} & \targ & \gate{T^{\dagger}} & \targ \qwx[2] & \qw & \gate{T} & \qw & \gate{H} & \qw \\
        \lstick{\ket{x_5}} & \qw & \qw & \qw & \ctrl{1} & \qw &&&  \qw & \ctrl{1} & \qw & \qw & \qw & \ctrl{1} & \gate{T} & \qw & \targ & \gate{T^{\dagger}} & \targ & \qw & \qw \\
        \lstick{\ket{0}_4} & \qw & \qw & \qw & \targ & \qw &&& \gate{H} & \targ & \gate{T^{\dagger}} & \targ & \gate{T} & \targ & \gate{T^{\dagger}} & \targ & \qw & \gate{T} & \qw & \gate{H} & \qw \gategroup{1}{12}{9}{12}{.7em}{--} \gategroup{1}{16}{9}{16}{.7em}{--} \gategroup{1}{17}{8}{17}{.7em}{--}
        \gategroup{1}{19}{8}{19}{.7em}{--}
        \gategroup{1}{18}{1}{18}{.7em}{.}
    }}
    \caption{The quantum circuit for the four shared-control Toffoli gates.}
    \label{fig:4Toffoli}
    \end{figure*}
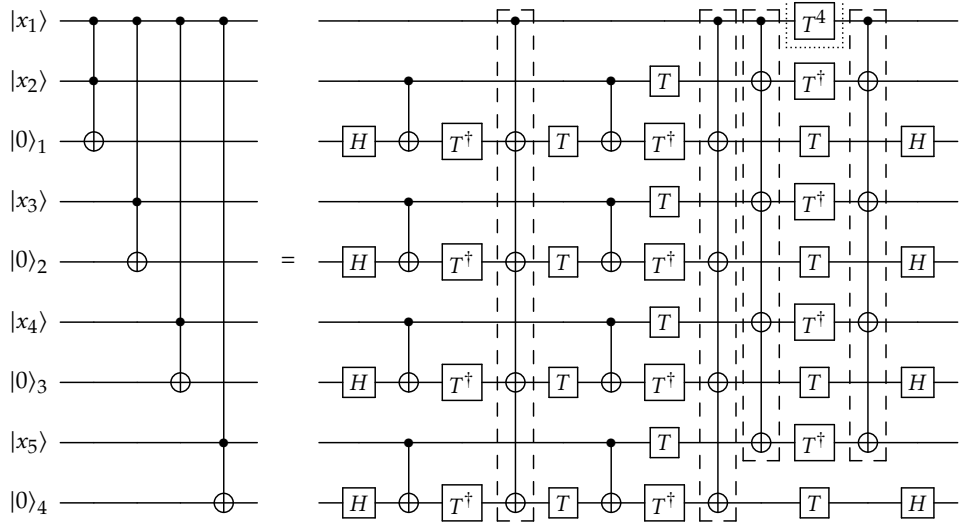

    To confirm the correctness of our circuit, let us examine the following scenarios. When $x_1 = 1$, all the output qubits remain unaffected, indicating a negative input. Conversely, when $x_1 = 0$, the $n-1$ Toffoli gates copy the values of $\ket{x_{i+1}}$ to $\ket{0}_i$ for $i = 1,2,\dots, n-1$. Consequently, we effectively obtain $\max(\boldsymbol{x},0)$.

    Next, we demonstrate that the $n-1$ Toffoli gates can be implemented using a quantum circuit with a $T$-depth of 4. The widely recognized quantum circuit for the Toffoli gate is depicted in Fig.~\ref{fig:toffoli} \cite{nielsen2010quantum}. It is important to note that the $n-1$ Toffoli gates in $C_n$ share a common control qubit, $\ket{x_1}$, while the other control qubits and target qubits are all distinct. We employ the shared-control Toffoli technique proposed in \cite{gokhale2021quantum}. For visualization, the implementation of 4 Toffoli gates is depicted in Fig.~\ref{fig:4Toffoli}. 

    Note that in the circuit of the Toffoli gate, the CNOT gates with $\ket{x_1}$ as the control qubit are replaced with quantum fan-out gates in the circuit of shared-control Toffoli gates, as illustrated within the dashed box in Fig.~\ref{fig:toffoli} and Fig.~\ref{fig:4Toffoli}. Additionally, for each Toffoli gate, a $T$ gate is applied to $\ket{x_1}$, such as the $T^4$ gate shown in Fig.~\ref{fig:4Toffoli}.

    To analyze the circuit complexity, two key observations are noted. Firstly, as stated in Lemma~\ref{lem:fanout}, the quantum fan-out gate does not contribute to the $T$-depth. Secondly, considering that the $T$-depth of the $T^{n-1}$ gate is not greater than 1 due to $T^2 = S$, where the $S$ gate is a Clifford gate. Hence, the $T$-depth of our circuit implementing the ReLU function is 4. According to Lemma~\ref{lem:fanout}, its circuit depth and size are $O(\log n)$ and $O(n)$ respectively.
\end{proof}

Now, we establish the lower bound on the depth and size of the quantum circuit implementing the ReLU function. Our approach is inspired by Theorem 3.2 in \cite{fang2006quantum}.

\begin{theorem}
    \label{the:ReLUl}
    If a quantum circuit $C$ implements the $n$-bit ReLU function, then the depth and size of $C$ are lower bounded by $\Omega(\log n)$ and $\Omega(n)$, respectively.
\end{theorem}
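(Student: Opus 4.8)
The plan is to prove both bounds with a causal light-cone argument, in the spirit of Theorem~3.2 of~\cite{fang2006quantum}, together with elementary counting; recall that every gate in the allowed set $\{CNOT, H, S, T\}$ acts on at most two qubits, and organize $C$ into $d$ layers of pairwise-disjoint gates, where $d$ is its depth.

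\textbf{Size lower bound.} I would argue that each of the $n-1$ output qubits must be touched by at least one gate of $C$. Fix $j \in \{1,\dots,n-1\}$ and let $v$ be a positive fixed-point value whose ReLU output has a $1$ in the bit held by the output qubit $o_j$ (for instance the value with a single nonzero bit); let $\boldsymbol a \in \{0,1\}^n$ be its $n$-bit representation, so the sign bit of $\boldsymbol a$ is $0$. Then $C\ket{\boldsymbol a}\ket{0}_{n-1}$ has $o_j$ in state $\ket 1$, whereas if no gate of $C$ acted on $o_j$ it would remain in $\ket 0$ on every input, a contradiction. Hence all $n-1$ output qubits are acted on, and since each gate covers at most two qubits, $C$ has at least $(n-1)/2 = \Omega(n)$ gates.

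\textbf{Depth lower bound.} For a qubit $q$ define its forward light cone $L(q)$ by setting $L(q) := \{q\}$ initially and, after each layer, adjoining to $L(q)$ every qubit lying in a gate of that layer that already meets $L(q)$. A standard fact, which I would verify by induction on the layers (the part of the circuit not connected to $q$ tensor-factorizes from the initial value of $q$), is that if two computational-basis inputs differ only in the value of $q$, then the two output states have the same reduced density operator on every qubit outside $L(q)$; equivalently, if the output value of some qubit $r$ depends on the input value of $q$, then $r \in L(q)$. Taking $q = x_1$: for each $j$ use the input $\boldsymbol a$ above together with the string $\boldsymbol b$ obtained by flipping its sign bit to $1$ (a negative number). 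On $\boldsymbol a$ the qubit $o_j$ ends in $\ket 1$, while on $\boldsymbol b$ it ends in $\ket 0$ since $\max(\boldsymbol b,0)=0$; as $\boldsymbol a$ and $\boldsymbol b$ differ only in $x_1$, this forces $o_j \in L(x_1)$, so $|L(x_1)| \ge n-1$.

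\textbf{Finishing and main obstacle.} It remains to bound the growth of $L(x_1)$: it has size $1$ before the first layer, and a single layer at most doubles it, because the gates of that layer meeting $L(x_1)$ are pairwise disjoint, each contains at least one qubit already in $L(x_1)$, and each adds at most one new qubit, so at most $|L(x_1)|$ qubits are adjoined. Hence $|L(x_1)| \le 2^{d}$, which with $|L(x_1)| \ge n-1$ gives $d \ge \log_2(n-1) = \Omega(\log n)$. The main (and essentially the only) delicate point is justifying the ``unaffected outside the light cone'' claim; both the counting and the doubling estimate are insensitive to the total number of qubits, so the bounds hold even if $C$ is permitted more than $2n-1$ qubits. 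Together with Theorem~\ref{the:ReLU}, this shows the $O(\log n)$ depth and $O(n)$ size achieved there are optimal up to constant factors.
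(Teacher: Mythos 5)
Your proof is correct and is essentially the same light-cone argument the paper uses (and attributes to the same source): you track the forward light cone of the sign bit and show it at most doubles per layer, whereas the paper equivalently tracks, in the Heisenberg picture, the backward cones $S_{d,i}$ of the $n-1$ output observables and bounds the multiplicity of any single qubit by $2^{d-1}$. Both yield $2^{O(d)}\ge n-1$, and your gate-counting for the $\Omega(n)$ size bound matches the paper's one-line justification.
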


\begin{proof}
    The size lower bound is directly derived from the fact that the output is influenced by all input bits. Regarding the depth lower bound, the key observation in our proof is that the sign bit $x_1$ could affect each output bit. We demonstrate that if one qubit can influence the measurement result of $n-1$ qubits, then the circuit depth has a lower bound of $\Omega(\log n)$.

    We assume the existence of a quantum circuit $C$ that implements the ReLU function. This implies that for a ReLU function $R$, the following equality holds:
    \begin{align*}
        &\bra{x_1,\dots,x_n,0,\dots,0} C^{\dagger} M_i C \ket{x_1,\dots,x_n,0,\dots,0} \\
        = &\bra{x_1,\dots,x_n,0,\dots,0} R^{\dagger} M_i R \ket{x_1,\dots,x_n,0,\dots,0}.
    \end{align*}
    Here, $i \in \{1,2,\dots, n-1\}$ and $M_i$ represent an observable on the $i$-th output qubit.

    Then, we consider the number of qubits that can influence the measurement result of $M_i$ in a $d$-depth quantum circuit. Let us assume that the quantum circuit $C$ consists of $d$ layers, and it can be represented as $C = L_1 L_2 \dots L_d$, where each $L_j$ represents a layer of quantum gates consisting of Clifford + $T$ gates. For each $d$ and $i$, we can identify minimal subsets $L_{1,i}',L_{2,i}',\dots,L_{d,i}'$ such that the following equality holds: 
    \begin{align*}
        & L_{d}^{\dagger} L_{d-1}^{\dagger} \cdots L_{1}^{\dagger} M_i L_{1} \cdots L_{d-1} L_{d} \\
        = & L_{d,i}'^{\dagger} L_{d-1,i}'^{\dagger} \cdots L_{1,i}'^{\dagger} M_i L_{1,i}' \cdots L_{d-1,i}' L_{d,i}'.
    \end{align*}
    We denote $S_{d,i}$ as the set of qubits involved in the quantum gates belonging to $L_{d,i}'$. It is important to note that the sign bit $x_1$ should belong to all sets $S_{d,i}$ for $i\in \{1,2,\dots,n-1\}$. If it does not, then the sign bit cannot influence the measurement result of $M_i$, which leads to a contradiction.

    Let $a_d$ denote the qubit that appears most frequently in $S_{d,1},S_{d,2},\dots,S_{d,n-1}$. Our aim is to prove that the number of sets $S_{d,i}$ containing $a_d$ is no more than $2^{d-1}$. We prove this statement by induction on $d$.

    For $d=1$, each qubit appears in at most one set. This is because there is only one layer of quantum gates and $L_{1,i}$ contains only the quantum gate applied to the $i$-th output qubit. The other gates are canceled, resulting in $L_{1}^{\dagger} M_i L_{1} = L_{1,i}'^{\dagger} M_i L_{1,i}'$.

    Now, suppose our statement holds for $d$. We observe that there are only two ways for a qubit $a$ to be contained in $S_{d+1,i}$.

    The first way is if $a$ is already present in $S_{d,i}$, then $a$ will also be present in $S_{d+1,i}$ since $S_{d,i} \subset S_{d+1,i}$.

    The second way is if there is a CNOT gate applied between qubits $a$ and $b$ belonging to $L_{d+1}$, and $b$ is in $S_{d,i}$. Both $a$ and $b$ appear in no more than $2^{d-1}$ different sets $S_{d,i}$ according to our supposition. Therefore, $a$ appears in no more than $2^d$ different sets $S_{d+1,i}$.

    Finally, since the sign bit needs to influence all the qubits for the output, we require $2^{d-1} \geq n-1$. Therefore, the circuit depth $d$ has a lower bound of $\Omega(\log n)$.
\end{proof}

\subsection{Implementing the ReLU Function on a 2D Grid}

    \begin{figure}[h]
    \centering
    \begin{tikzpicture}[
    roundnode/.style={circle, draw=black!80, fill=black!80, very thick, minimum size=5mm, text width=1em, text height=1ex, text=white}, node distance=0.2cm and 0.2cm]
    \node[roundnode]  (x1)                {$x_1$};
    \node[roundnode]  (x2)  [below=of x1] {$x_2$};
    \node[roundnode]  (x3)  [below=of x2] {$x_3$};
    \node[roundnode]  (x4)  [below=of x3] {$x_4$};
    \node[roundnode]  (x5)  [below=of x4] {$x_5$};
    \node[roundnode]  (x6)  [below=of x5] {$x_6$};
    \node[roundnode]  (x7)  [below=of x6] {$x_7$};
    \node[roundnode]  (x8)  [below=of x7] {$x_8$};

    \node[roundnode]  (a0)  [right=of x1] {$a_0$};
    \node[roundnode]  (a1)  [right=of x2] {$a_1$};
    \node[roundnode]  (a2)  [right=of x3] {$a_2$};
    \node[roundnode]  (a3)  [right=of x4] {$a_3$};
    \node[roundnode]  (a4)  [right=of x5] {$a_4$};
    \node[roundnode]  (a5)  [right=of x6] {$a_5$};
    \node[roundnode]  (a6)  [right=of x7] {$a_6$};
    \node[roundnode]  (a7)  [right=of x8] {$a_7$};

    \node[roundnode]  (x9)  [right=of a0] {$x_9$};
    \node[roundnode]  (x10)  [right=of a1] {$x_{10}$};
    \node[roundnode]  (x11)  [right=of a2] {$x_{11}$};
    \node[roundnode]  (x12)  [right=of a3] {$x_{12}$};
    \node[roundnode]  (x13)  [right=of a4] {$x_{13}$};
    \node[roundnode]  (x14)  [right=of a5] {$x_{14}$};
    \node[roundnode]  (x15)  [right=of a6] {$x_{15}$};
    \node[roundnode]  (x16)  [right=of a7] {$x_{16}$};

    \node[roundnode]  (a8)  [right=of x9] {$a_8$};
    \node[roundnode]  (a9)  [right=of x10] {$a_9$};
    \node[roundnode]  (a10)  [right=of x11] {$a_{10}$};
    \node[roundnode]  (a11)  [right=of x12] {$a_{11}$};
    \node[roundnode]  (a12)  [right=of x13] {$a_{12}$};
    \node[roundnode]  (a13)  [right=of x14] {$a_{13}$};
    \node[roundnode]  (a14)  [right=of x15] {$a_{14}$};
    \node[roundnode]  (a15)  [right=of x16] {$a_{15}$};

    \node[roundnode]  (x17)  [right=of a8] {$x_{17}$};
    \node[roundnode]  (x18)  [right=of a9] {$x_{18}$};
    \node[roundnode]  (x19)  [right=of a10] {$x_{19}$};
    \node[roundnode]  (x20)  [right=of a11] {$x_{20}$};
    \node[roundnode]  (x21)  [right=of a12] {$x_{21}$};
    \node[roundnode]  (x22)  [right=of a13] {$x_{22}$};
    \node[roundnode]  (x23)  [right=of a14] {$x_{23}$};
    \node[roundnode]  (x24)  [right=of a15] {$x_{24}$};

    \node[roundnode]  (a16)  [right=of x17] {$a_{16}$};
    \node[roundnode]  (a17)  [right=of x18] {$a_{17}$};
    \node[roundnode]  (a18)  [right=of x19] {$a_{18}$};
    \node[roundnode]  (a19)  [right=of x20] {$a_{19}$};
    \node[roundnode]  (a20)  [right=of x21] {$a_{20}$};
    \node[roundnode]  (a21)  [right=of x22] {$a_{21}$};
    \node[roundnode]  (a22)  [right=of x23] {$a_{22}$};
    \node[roundnode]  (a23)  [right=of x24] {$a_{23}$};

    \node[roundnode]  (x25)  [right=of a16] {$x_{25}$};
    \node[roundnode]  (x26)  [right=of a17] {$x_{26}$};
    \node[roundnode]  (x27)  [right=of a18] {$x_{27}$};
    \node[roundnode]  (x28)  [right=of a19] {$x_{28}$};
    \node[roundnode]  (x29)  [right=of a20] {$x_{29}$};
    \node[roundnode]  (x30)  [right=of a21] {$x_{30}$};
    \node[roundnode]  (x31)  [right=of a22] {$x_{31}$};
    \node[roundnode]  (x32)  [right=of a23] {$x_{32}$};

    \node[roundnode]  (a24)  [right=of x25] {$a_{24}$};
    \node[roundnode]  (a25)  [right=of x26] {$a_{25}$};
    \node[roundnode]  (a26)  [right=of x27] {$a_{26}$};
    \node[roundnode]  (a27)  [right=of x28] {$a_{27}$};
    \node[roundnode]  (a28)  [right=of x29] {$a_{28}$};
    \node[roundnode]  (a29)  [right=of x30] {$a_{29}$};
    \node[roundnode]  (a30)  [right=of x31] {$a_{30}$};
    \node[roundnode]  (a31)  [right=of x32] {$a_{31}$};
    
    \draw[-, line width=4pt, black] (x1.east) -- (a0.west);
    \draw[-, line width=4pt, black] (x2.east) -- (a1.west);
    \draw[-, line width=4pt, black] (x3.east) -- (a2.west);
    \draw[-, line width=4pt, black] (x4.east) -- (a3.west);
    \draw[-, line width=4pt, black] (x5.east) -- (a4.west);
    \draw[-, line width=4pt, black] (x6.east) -- (a5.west);
    \draw[-, line width=4pt, black] (x7.east) -- (a6.west);
    \draw[-, line width=4pt, black] (x8.east) -- (a7.west);

    \draw[-, line width=4pt, black] (x9.east) -- (a8.west);
    \draw[-, line width=4pt, black] (x10.east) -- (a9.west);
    \draw[-, line width=4pt, black] (x11.east) -- (a10.west);
    \draw[-, line width=4pt, black] (x12.east) -- (a11.west);
    \draw[-, line width=4pt, black] (x13.east) -- (a12.west);
    \draw[-, line width=4pt, black] (x14.east) -- (a13.west);
    \draw[-, line width=4pt, black] (x15.east) -- (a14.west);
    \draw[-, line width=4pt, black] (x16.east) -- (a15.west);

    \draw[-, line width=4pt, black] (x17.east) -- (a16.west);
    \draw[-, line width=4pt, black] (x18.east) -- (a17.west);
    \draw[-, line width=4pt, black] (x19.east) -- (a18.west);
    \draw[-, line width=4pt, black] (x20.east) -- (a19.west);
    \draw[-, line width=4pt, black] (x21.east) -- (a20.west);
    \draw[-, line width=4pt, black] (x22.east) -- (a21.west);
    \draw[-, line width=4pt, black] (x23.east) -- (a22.west);
    \draw[-, line width=4pt, black] (x24.east) -- (a23.west);

    \draw[-, line width=4pt, black] (x25.east) -- (a24.west);
    \draw[-, line width=4pt, black] (x26.east) -- (a25.west);
    \draw[-, line width=4pt, black] (x27.east) -- (a26.west);
    \draw[-, line width=4pt, black] (x28.east) -- (a27.west);
    \draw[-, line width=4pt, black] (x29.east) -- (a28.west);
    \draw[-, line width=4pt, black] (x30.east) -- (a29.west);
    \draw[-, line width=4pt, black] (x31.east) -- (a30.west);
    \draw[-, line width=4pt, black] (x32.east) -- (a31.west);

    \draw[-, line width=4pt, black] (a0.east) -- (x9.west);
    \draw[-, line width=4pt, black] (a1.east) -- (x10.west);
    \draw[-, line width=4pt, black] (a2.east) -- (x11.west);
    \draw[-, line width=4pt, black] (a3.east) -- (x12.west);
    \draw[-, line width=4pt, black] (a4.east) -- (x13.west);
    \draw[-, line width=4pt, black] (a5.east) -- (x14.west);
    \draw[-, line width=4pt, black] (a6.east) -- (x15.west);
    \draw[-, line width=4pt, black] (a7.east) -- (x16.west);

    \draw[-, line width=4pt, black] (a8.east) -- (x17.west);
    \draw[-, line width=4pt, black] (a9.east) -- (x18.west);
    \draw[-, line width=4pt, black] (a10.east) -- (x19.west);
    \draw[-, line width=4pt, black] (a11.east) -- (x20.west);
    \draw[-, line width=4pt, black] (a12.east) -- (x21.west);
    \draw[-, line width=4pt, black] (a13.east) -- (x22.west);
    \draw[-, line width=4pt, black] (a14.east) -- (x23.west);
    \draw[-, line width=4pt, black] (a15.east) -- (x24.west);

    \draw[-, line width=4pt, black] (a16.east) -- (x25.west);
    \draw[-, line width=4pt, black] (a17.east) -- (x26.west);
    \draw[-, line width=4pt, black] (a18.east) -- (x27.west);
    \draw[-, line width=4pt, black] (a19.east) -- (x28.west);
    \draw[-, line width=4pt, black] (a20.east) -- (x29.west);
    \draw[-, line width=4pt, black] (a21.east) -- (x30.west);
    \draw[-, line width=4pt, black] (a22.east) -- (x31.west);
    \draw[-, line width=4pt, black] (a23.east) -- (x32.west);

    \draw[-, line width=4pt, black] (x1.south) -- (x2.north);
    \draw[-, line width=4pt, black] (x2.south) -- (x3.north);
    \draw[-, line width=4pt, black] (x3.south) -- (x4.north);
    \draw[-, line width=4pt, black] (x4.south) -- (x5.north);
    \draw[-, line width=4pt, black] (x5.south) -- (x6.north);
    \draw[-, line width=4pt, black] (x6.south) -- (x7.north);
    \draw[-, line width=4pt, black] (x7.south) -- (x8.north);

    \draw[-, line width=4pt, black] (x9.south) -- (x10.north);
    \draw[-, line width=4pt, black] (x10.south) -- (x11.north);
    \draw[-, line width=4pt, black] (x11.south) -- (x12.north);
    \draw[-, line width=4pt, black] (x12.south) -- (x13.north);
    \draw[-, line width=4pt, black] (x13.south) -- (x14.north);
    \draw[-, line width=4pt, black] (x14.south) -- (x15.north);
    \draw[-, line width=4pt, black] (x15.south) -- (x16.north);

    \draw[-, line width=4pt, black] (x17.south) -- (x18.north);
    \draw[-, line width=4pt, black] (x18.south) -- (x19.north);
    \draw[-, line width=4pt, black] (x19.south) -- (x20.north);
    \draw[-, line width=4pt, black] (x20.south) -- (x21.north);
    \draw[-, line width=4pt, black] (x21.south) -- (x22.north);
    \draw[-, line width=4pt, black] (x22.south) -- (x23.north);
    \draw[-, line width=4pt, black] (x23.south) -- (x24.north);

    \draw[-, line width=4pt, black] (x25.south) -- (x26.north);
    \draw[-, line width=4pt, black] (x26.south) -- (x27.north);
    \draw[-, line width=4pt, black] (x27.south) -- (x28.north);
    \draw[-, line width=4pt, black] (x28.south) -- (x29.north);
    \draw[-, line width=4pt, black] (x29.south) -- (x30.north);
    \draw[-, line width=4pt, black] (x30.south) -- (x31.north);
    \draw[-, line width=4pt, black] (x31.south) -- (x32.north);

    \draw[-, line width=4pt, black] (a0.south) -- (a1.north);
    \draw[-, line width=4pt, black] (a1.south) -- (a2.north);
    \draw[-, line width=4pt, black] (a2.south) -- (a3.north);
    \draw[-, line width=4pt, black] (a3.south) -- (a4.north);
    \draw[-, line width=4pt, black] (a4.south) -- (a5.north);
    \draw[-, line width=4pt, black] (a5.south) -- (a6.north);
    \draw[-, line width=4pt, black] (a6.south) -- (a7.north);

    \draw[-, line width=4pt, black] (a8.south) -- (a9.north);
    \draw[-, line width=4pt, black] (a9.south) -- (a10.north);
    \draw[-, line width=4pt, black] (a10.south) -- (a11.north);
    \draw[-, line width=4pt, black] (a11.south) -- (a12.north);
    \draw[-, line width=4pt, black] (a12.south) -- (a13.north);
    \draw[-, line width=4pt, black] (a13.south) -- (a14.north);
    \draw[-, line width=4pt, black] (a14.south) -- (a15.north);

    \draw[-, line width=4pt, black] (a16.south) -- (a17.north);
    \draw[-, line width=4pt, black] (a17.south) -- (a18.north);
    \draw[-, line width=4pt, black] (a18.south) -- (a19.north);
    \draw[-, line width=4pt, black] (a19.south) -- (a20.north);
    \draw[-, line width=4pt, black] (a20.south) -- (a21.north);
    \draw[-, line width=4pt, black] (a21.south) -- (a22.north);
    \draw[-, line width=4pt, black] (a22.south) -- (a23.north);

    \draw[-, line width=4pt, black] (a24.south) -- (a25.north);
    \draw[-, line width=4pt, black] (a25.south) -- (a26.north);
    \draw[-, line width=4pt, black] (a26.south) -- (a27.north);
    \draw[-, line width=4pt, black] (a27.south) -- (a28.north);
    \draw[-, line width=4pt, black] (a28.south) -- (a29.north);
    \draw[-, line width=4pt, black] (a29.south) -- (a30.north);
    \draw[-, line width=4pt, black] (a30.south) -- (a31.north);

    \draw[line width=2pt, red] (-0.6,-3.2) -- (7,-3.2);
    \draw[line width=2pt, red] (3.2,0.6) -- (3.2,-7);
    \end{tikzpicture}
    \caption{An illustration of qubit arrangement on a 2D grid, with $x_i$ representing input qubits and $a_j$ representing output qubits. The main idea of constructing the circuit is to partition the problem into four smaller ones, as illustrated in the figure.}
    \label{fig:2Dgrid1}
    \end{figure}
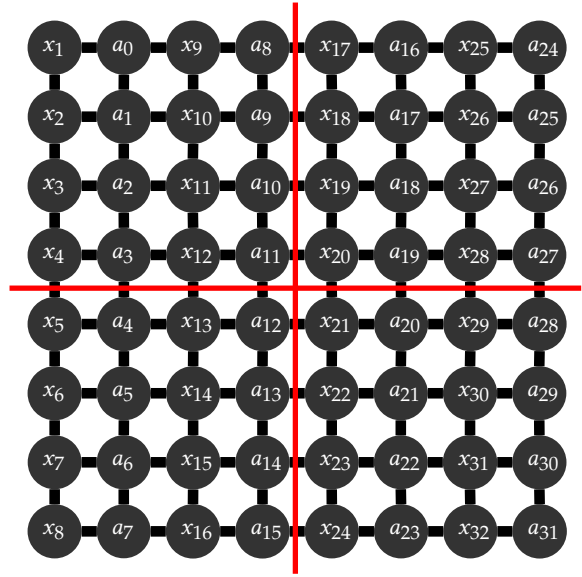

    \begin{figure*}
    \centering
    \includegraphics[width=0.85\textwidth]{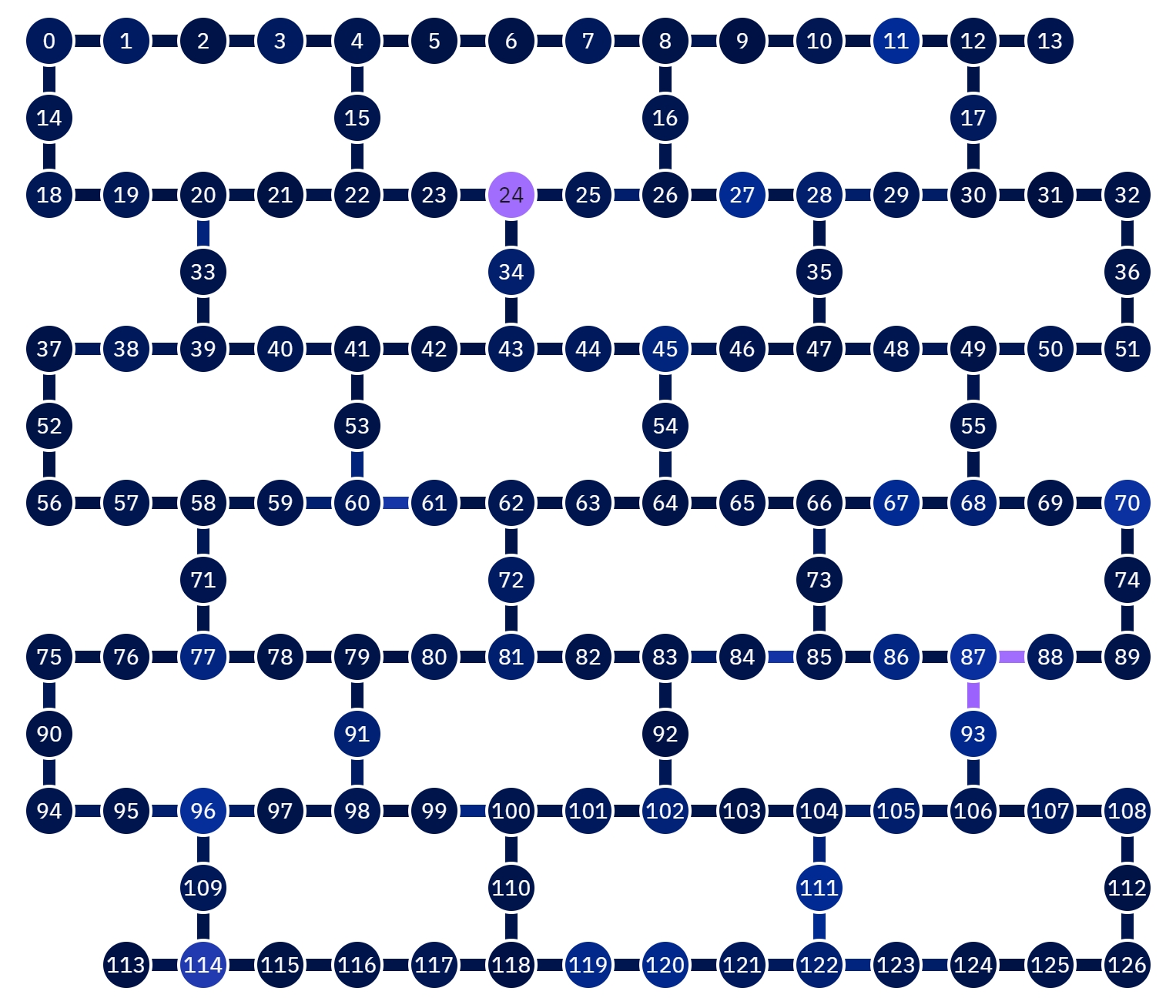}
    \caption{The qubits arrangement of 127 qubits quantum processor \texttt{ibm\_brisbane} from IBM.}
    \label{fig:ibmq}
    \end{figure*}

In this subsection, our focus is on implementing the ReLU function using a quantum circuit on a 2D grid. We begin by detailing the construction of the circuit, followed by establishing a lower bound for implementing ReLU on a 2D grid. Finally, we demonstrate the extension of this approach to other planar structures.

\begin{theorem}
    There exists a quantum circuit $C_n$ capable of implementing the $n$-bit ReLU function with a constant $T$-depth of $4$ without ancillary qubits. Furthermore, the circuit satisfies the connectivity constraint of a 2D grid, with the depth and size being $O(\sqrt{n})$ and $O(n)$ respectively.
\end{theorem}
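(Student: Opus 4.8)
The plan is to keep the three-step circuit of Theorem~\ref{the:ReLU} --- an $X$ on the sign qubit $\ket{x_1}$, the $n-1$ shared-control Toffoli gates, and a second $X$ on $\ket{x_1}$ --- but to realize it on a 2D grid by (i) choosing a placement of the $2n-1$ qubits in which every gate of the shared-control Toffoli gadget, apart from a single fan-out, already acts on grid-adjacent qubits, and (ii) replacing the all-to-all fan-out $F_n$ of Lemma~\ref{lem:fanout} by a grid-respecting fan-out of depth $O(\sqrt n)$. Since the gate set and the $T$-count bookkeeping are untouched --- the fan-out has $T$-depth $0$, and the accumulated powers of $T$ on $\ket{x_1}$ have $T$-depth at most $1$ because $T^2=S$ is Clifford --- the $T$-depth will again be $4$; only the overall depth changes, from $O(\log n)$ to $O(\sqrt n)$. (That $\Omega(\sqrt n)$ depth is unavoidable follows from a light-cone bound, since on a 2D grid a qubit can influence only $O(d^2)$ others in depth $d$, whereas the sign bit must influence all $n-1$ outputs.)

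First I would fix the layout, following Fig.~\ref{fig:2Dgrid1}: pair each input qubit $\ket{x_{i+1}}$ with its output qubit $\ket{0}_i$ so that the two occupy adjacent grid cells, tile these $n-1$ pairs into a roughly $\sqrt n \times \sqrt n$ array, and place the sign qubit $\ket{x_1}$ at a corner. With this placement the Hadamards, the $T$ and $T^\dagger$ gates, and the CNOTs between $\ket{x_{i+1}}$ and $\ket{0}_i$ that make up each Toffoli gadget all act within a cell or between neighbouring cells, contributing depth $O(1)$ and exactly the $T$-depth $4$ of Theorem~\ref{the:ReLU}. The only nonlocal ingredient left is the fan-out that broadcasts $\ket{x_1}$ across all the cells --- which occurs only a constant number of times inside the shared-control Toffoli construction --- so the whole problem reduces to implementing one such fan-out on the grid in depth $O(\sqrt n)$.

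The key step is therefore a grid analogue of Lemma~\ref{lem:fanout}. Fix a spanning tree $\mathcal{T}$ of the grid rooted at $\ket{x_1}$ whose depth is $O(\sqrt n)$ --- for instance the ``comb'' made of a spine along one column together with one horizontal tooth per row, or the central-cross-plus-four-quadrants tree suggested by the red lines in Fig.~\ref{fig:2Dgrid1} --- and implement the fan-out in three phases: a leaves-first forward sweep that applies a CNOT along every tree edge (parent controlling child), a depth-$1$ injection step of CNOTs from $\ket{x_1}$ to its tree-children, and a root-first reverse sweep that again applies a CNOT along every tree edge. A short telescoping argument --- each non-root qubit is the target of exactly its parent edge, once per sweep, while its parent's value changes by exactly $\oplus\, x_1$ between the two sweeps --- shows this realizes $\ket{x_1}\ket{\boldsymbol b}\mapsto\ket{x_1}\ket{\boldsymbol b\oplus x_1}$ with CNOTs only and no ancilla; since CNOTs on disjoint tree edges commute, each sweep can be scheduled in depth equal to the depth of $\mathcal{T}$, giving total depth $O(\sqrt n)$, size $O(n)$, and $T$-depth $0$. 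Substituting this subroutine for $F_n$ in the circuit of Theorem~\ref{the:ReLU} then yields the claimed circuit: $T$-depth $4$, depth $O(\sqrt n)$, size $O(n)$, every two-qubit gate between grid neighbours, and no ancilla.

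The part I expect to take the most care is the joint layout-and-scheduling argument: one must exhibit a single placement of the $2n-1$ qubits that simultaneously (a) makes every two-qubit Toffoli-gadget gate other than the fan-out act on grid neighbours, (b) admits a rooted spanning tree of depth $O(\sqrt n)$ reaching all fan-out targets, and (c) lets the forward and reverse sweeps be ordered so that their depth equals the tree depth $O(\sqrt n)$ rather than the tree size $O(n)$ --- which is precisely why the comb / four-quadrant structure, rather than an arbitrary Hamiltonian path, is needed. A secondary, routine check is that running the fan-out once more (it is its own inverse) correctly uncomputes the broadcast copies, so that the input register $\ket{\boldsymbol x}$ is returned unchanged, exactly as in the non-geometric construction.
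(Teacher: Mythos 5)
Your overall strategy coincides with the paper's: keep the shared-control Toffoli circuit of Theorem~\ref{the:ReLU}, place each output qubit next to its paired input qubit so that every gate except the fan-outs is already nearest-neighbour, and replace the fan-out by a grid-local implementation of depth $O(\sqrt n)$; your three-phase tree sweep is also the natural generalization of the forward/inject/reverse structure visible in Fig.~\ref{fig:F_8}. However, there is a genuine gap in your fan-out subroutine. Your construction XORs $x_1$ into \emph{every} non-root node of the tree $\mathcal{T}$, so $\mathcal{T}$ must contain exactly the fan-out targets and nothing else. In the shared-control gadget of Fig.~\ref{fig:4Toffoli} the four fan-outs have two different target sets: two of them target only the output qubits $\ket{0}_1,\dots,\ket{0}_{n-1}$ and two target only the input qubits $x_2,\dots,x_n$. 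Under the layout you adopt (inputs and outputs paired into dominoes, as in Fig.~\ref{fig:2Dgrid1}), neither target set induces a connected subgraph of the grid --- the output qubits sit in separate columns whose only neighbours are input qubits, and vice versa --- so any tree reaching all targets must pass through non-target qubits, and your sweep then deposits an unwanted $\oplus\,x_1$ on each of them. This is not harmless: between a fan-out and its partner, the gadget applies $T$ gates and CNOTs to precisely those ``conduit'' qubits (e.g.\ the CNOTs controlled by $x_2,\dots,x_n$ in Fig.~\ref{fig:4Toffoli}), so they would act on corrupted values and the circuit would no longer compute ReLU.

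The paper closes exactly this hole by never XORing into a conduit: it realizes each long tree edge as a long-distance CNOT conjugated by a chain of swap gates, which restores every intermediate qubit, and organizes the broadcast as a four-way recursive subdivision of the grid, giving the recurrences $D(k^2/2)\le D((k/2)^2/2)+O(k)$ and $S(k^2/2)\le 4S((k/2)^2/2)+O(k)$ and hence depth $O(\sqrt n)$ and size $O(n)$. To repair your argument you would either need to adopt such restored long-range CNOTs for the tree edges that leave the target set, or exhibit a placement in which both the output set and the input set (each together with $x_1$) induce connected subgraphs of depth-$O(\sqrt n)$ --- neither of which follows from the layout and spanning tree as you have stated them. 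A minor secondary point: edges of $\mathcal{T}$ at the same depth can share a control qubit (a node with several children), so a sweep level costs up to the maximum degree rather than one layer; on a grid this is only a constant factor, but it should be said.
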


\begin{proof}
    The quantum circuit presented here is derived from the circuit outlined in Theorem~\ref{the:ReLU}. Our approach involves strategically positioning each qubit to facilitate the direct application of all CNOT gates, with the exception of four quantum fan-out gates. Subsequently, we devise a distinct quantum circuit consisting solely of CNOT gates to execute the quantum fan-out operation within a 2D grid. Consequently, the $T$-depth of our circuit remains at 4.

    We assume that the number of input qubits is given by $n=k^2/2$ for some even integer $k$. Cases where $n$ assumes other values can be derived from this scenario. In a 2D grid layout, we position the input qubits $x_1,\dots,x_n$ along the odd columns, while the output qubits $a_1,\dots,a_{n-1}$ are situated across the even columns. Specifically, for each output qubit $a_i$, we position $x_{i+1}$ to its immediate left. An illustration of this configuration for $n=32$ is depicted in Fig.~\ref{fig:2Dgrid1}, where the unused qubit $a_0$ is added for aesthetic purposes.


    Note that on the 2D grid, the CNOT gate can be applied directly between two adjacent qubits. Therefore, according to Theorem~\ref{the:ReLU}, all the CNOT gates in the ReLU circuit can be applied directly, and our focus shifts solely to implementing the quantum fan-out gates on a 2D grid. Consider the quantum fan-out gate circuit depicted in Fig.~\ref{fig:F_8}. Initially, we apply a CNOT gate to add the value of $\ket{a}$ to $\ket{b_1}$. Subsequently, by utilizing CNOT gates on both sides of $\ket{b_1}$, we extend the value of $\ket{a}$ to $\ket{b_5}$. Leveraging $\ket{b_1}$ and $\ket{b_5}$, we concurrently add the value of $\ket{a}$ to $\ket{b_3}$ and $\ket{b_7}$, achieving parallelism. We adopt a similar approach on the 2D grid, dividing it evenly into four parts and transferring the value of $\ket{x_1}$ to the four smaller 2D grids. This partitions the problem into four smaller sub-problems that can be processed concurrently. We then proceed with recursion. In Fig.~\ref{fig:2Dgrid1}, we illustrate an example where we first transfer the value of $\ket{x_1}$ to $\ket{x_{17}}$, and then concurrently transfer the value of $\ket{x_1}$ to $\ket{x_{5}}$ and $\ket{x_{21}}$, thus partitioning the problem into four smaller ones. To implement these long-distance CNOT gates, we utilize swap gates, which can be achieved using three CNOT gates. Denoting $D(n)$ and $S(n)$ as the depth and size of this circuit when the number of input bits is $n=k^2/2$, respectively, we establish the following recursive relation:
    \begin{align*}
        D(\frac{k^2}2) & \leq D(\frac{(k/2)^2}{2}) + O(k),\\
        S(\frac{k^2}2) & \leq 4S(\frac{(k/2)^2}{2}) + O(k). 
    \end{align*}
    Then, by the master theorem \cite{cormen2022introduction}, we have $D(n) = O(\sqrt{n})$ and $S(n) = O(n)$. We only present the quantum fan-out gate for $x_i$. The implementation for $a_i$ follows a similar pattern and is omitted. If $n$ is not equal to $k^2/2$ for some integer $k$, the circuit remains similar. The crucial step is to divide the 2D grid into four smaller parts and apply recursion. This concludes the proof.

\end{proof}

\begin{theorem}
    Suppose a quantum circuit $C$ implements the $n$-bit ReLU function, restricting CNOT gates in $C$ to be applied only between nearby qubits on a 2D grid. In that case, the depth and size of $C$ are lower bounded by $\Omega(\sqrt{n})$ and $\Omega(n)$, respectively.
\end{theorem}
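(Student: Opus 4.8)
\emph{Plan overview and the size bound.}
The idea is to re-run the light-cone argument of Theorem~\ref{the:ReLUl}, but with the key geometric change that on a 2D grid a CNOT gate can only couple qubits that are physically adjacent. As a result, the set of qubits that can influence (or be influenced by) a fixed qubit after $d$ circuit layers does not double per layer, but instead grows by only one step in grid distance per layer, so after $d$ layers it is confined to a ball of radius $d$, which on the grid contains only $O(d^2)$ qubits. The size lower bound $\Omega(n)$ is unaffected by the connectivity restriction and follows exactly as in Theorem~\ref{the:ReLUl}: every output bit of $\max(\boldsymbol{x},0)$ depends on some input bit, and collectively all $n$ input bits influence the output, so each of the $n$ input qubits must be acted on by at least one gate, forcing $\Omega(n)$ gates.

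\emph{Setup for the depth bound.}
Write $C = L_1 L_2 \cdots L_d$ as a product of $d$ layers of Clifford$+T$ gates. For $i \in \{1,\dots,n-1\}$ let $M_i$ be the observable $Z$ on the $i$-th output qubit, which sits at some grid position $o_i$, and let $p$ be the grid position of the qubit carrying the sign bit $x_1$. As in Theorem~\ref{the:ReLUl}, since $C$ implements the ReLU function we have $\bra{\boldsymbol{x},0,\dots,0} C^\dagger M_i C \ket{\boldsymbol{x},0,\dots,0} = \bra{\boldsymbol{x},0,\dots,0} R^\dagger M_i R \ket{\boldsymbol{x},0,\dots,0}$ for the reference ReLU unitary $R$. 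Choosing the input with $x_{i+1}=1$ and all other non-sign bits equal to $0$, this expectation is $-1$ when $x_1=0$ and $+1$ when $x_1=1$; hence flipping $x_1$ changes $\langle M_i\rangle$, which forces the qubit at position $p$ to lie in $\supp\big(L_d^\dagger \cdots L_1^\dagger M_i L_1 \cdots L_d\big)$.

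\emph{The ball argument and conclusion.}
I claim $\supp\big(L_d^\dagger \cdots L_1^\dagger M_i L_1 \cdots L_d\big) \subseteq B_d(o_i)$, the set of qubits within grid distance $d$ of $o_i$. This is an induction on $d$: if an operator is supported on a set $A$, conjugating it by one more layer $L$ can enlarge its support only to $A$ together with the CNOT-partners in $L$ of qubits already in $A$, because single-qubit gates (including $T$, $H$, $S$) never enlarge support; since CNOT gates in $L$ act only on grid-adjacent pairs, the support spreads by at most one step in grid distance per layer. Therefore $p \in B_d(o_i)$, equivalently $o_i \in B_d(p)$, for every $i \in \{1,\dots,n-1\}$, so all $n-1$ output qubits lie in the single ball $B_d(p)$. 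On a 2D grid $|B_d(p)| = O(d^2)$, whence $n-1 = O(d^2)$ and $d = \Omega(\sqrt{n})$.

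\emph{Main obstacle.}
The delicate point is the support-tracking step: once $T$ gates are present, $L_d^\dagger\cdots L_1^\dagger M_i L_1 \cdots L_d$ is in general a linear combination of Paulis rather than a single Pauli, so one must argue the one-step-per-layer light-cone bound for the support of a general operator, and in particular verify that single-qubit non-Clifford gates keep the support on the qubit they act on. The remaining ingredient is the elementary volume estimate $|B_d| = O(d^2)$ for the square grid; replacing the square grid by the heavy-hex layout of devices such as \texttt{ibm\_brisbane} (Fig.~\ref{fig:ibmq}), or by any planar layout of bounded density, only affects the constant in this estimate, so the $\Omega(\sqrt{n})$ bound persists.
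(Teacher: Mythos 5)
Your proposal is correct and follows essentially the same route as the paper: both track the backward light cone of each output observable $M_i$ through the $d$ layers, observe that on a 2D grid the support can spread by at most one grid step per layer (only via nearest-neighbour CNOTs), conclude that the sign qubit must lie within grid distance $d$ of every output qubit, and finish with the $O(d^2)$ volume bound on a ball of radius $d$. Your explicit verification that flipping $x_1$ changes $\langle M_i\rangle$ (and hence that the sign qubit must lie in the support) is a welcome detail the paper only asserts, but it does not change the argument.
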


\begin{proof}
    The size lower bound directly follows from Theorem~\ref{the:ReLUl}. As for the depth lower bound, similar to the proof of Theorem~\ref{the:ReLUl}, we assume that the quantum circuit $C$ has $d$ layers, represented as $C = L_1 L_2 \dots L_d$, where each $L_j$ represents a layer of quantum gates consisting of Clifford + $T$ gates. For each $d$ and $i$, we can find the minimal subset $L_{1,i}',L_{2,i}',\dots,L_{d,i}'$ such that the following equality holds:
    \begin{align*}
        & L_{d}^{\dagger} L_{d-1}^{\dagger} \cdots L_{1}^{\dagger} M_i L_{1} \cdots L_{d-1} L_{d} \\
        = & L_{d,i}'^{\dagger} L_{d-1,i}'^{\dagger} \cdots L_{1,i}'^{\dagger} M_i L_{1,i}' \cdots L_{d-1,i}' L_{d,i}'.
    \end{align*}
    Here, $i \in \{1,2,\dots, n-1\}$, and $M_i$ represents an observable on the $i$-th output qubit. We denote $S_{d,i}$ as the set of qubits involved in the quantum gates belonging to $L_{d,i}'$.
    
    The crucial observation is that if a qubit $a$ is part of $S_{d+1,i}$ but not $S_{d,i}$, it indicates the presence of a CNOT gate in $L_{d+1,i}$ that acts on qubits $a$ and $b$, where qubit $b$ is in $S_{d,i}$. This implies that the minimum distance between the sign bit $x_1$ and all the output qubits $a_1,a_2,\dots,a_{n-1}$ should be at most $d$ in the 2D grid. Otherwise, $S_{d,i}$ cannot include $x_1$.

    Consider that within a 2D grid, the number of points within a distance of $d$ or less is $2d(d+1)$. To ensure that this quantity is greater than or equal to $n-1$, we need $2d(d+1) \geq n-1$. Consequently, the circuit depth is lower bounded by $\Omega(\sqrt{n})$.
\end{proof}

Our approach to designing a circuit for implementing the ReLU function on a 2D grid can be readily extended to other planar structures, such as the Eagle r3 type quantum processor from IBM, as depicted in Fig.~\ref{fig:ibmq}. In this type of quantum processor, the ReLU function can be implemented with a constant $T$-depth, a circuit depth of $\Theta(\sqrt{n})$, and a size of $\Theta(n)$. We can position each output qubit $a_i$ near its corresponding input qubit $x_{i+1}$. Employing a similar recursion method, we achieve an $O(\sqrt{n})$ circuit depth. The lower bound of $\Omega(\sqrt{n})$ for the circuit depth directly follows from the fact that the number of qubits within a distance no greater than $d$ from a fixed qubit is upper-bounded by $O(d^2)$.

\section{Implementation of Leaky ReLU function}
\label{sec:leakyrelu}

The definition of the Leaky ReLU function is as follows:
\begin{equation*}
    f(x) = \left\{
    \begin{array}{cc}
        x & x \geq 0 \\
        \alpha x & x < 0, \ 0 < \alpha < 1
    \end{array}
    \right..
\end{equation*}

\begin{definition}
    A circuit $C_n$ can implement the $n$-bit Leaky ReLU function, which means that:
    \begin{equation*}
        C_n\ket{\boldsymbol{x}}\ket{0}_{m} = \ket{\boldsymbol{x}}\ket{\max(\boldsymbol{x},\alpha \boldsymbol{x})}.
    \end{equation*}
    Where $\boldsymbol{x} \in \{0,1\}^n$ represents a fixed-point number, and the circuit $C_n$ consists of Clifford+$T$ gates. Here, $\alpha$ is a predetermined parameter chosen from the set $\{0.125, 0.0625, 0.03125, 0.015625\}$. The number of output qubits $m$ is determined by the number of bits required to store $\max(\boldsymbol{x},\alpha \boldsymbol{x})$.
\end{definition}

Note that we choose $\alpha$ from the set $\{2^{-3},2^{-4},2^{-5},2^{-6}\}$. Therefore, $m = n+3, n+4, n+5,$ or $n+6$ is sufficient to store the result $\max(x, \alpha x)$ exactly, depending on the specific choice of $\alpha$. The input and output can be coded using either 2's complement or True form. The quantum circuit that implements the Leaky ReLU function will vary depending on the chosen coding method. For example, if $x$ is coded as $1011$, then $0.125x$ should be coded as $1000.011$ and $1111.011$ in True form and 2's complement, respectively. Note that the $n$ and $m$ qubits are needed to store the input and output, respectively. When the input has $n$ bits, we will use $\ket{0}_1\ket{0}_2 \cdots \ket{0}_n$ to store the integer part and $\ket{0}_{n+1} \cdots \ket{0}_m$ to store the decimal part.

\begin{figure*}
    \centering
    \mbox{\Qcircuit @C=1.7em @R=0.2em @!R{
        \lstick{\ket{x_1}} & \gate{X} & \ctrl{1} & \ctrl{2} & \ctrl{3} & \ctrl{4} & \gate{X} \barrier[-1.1em]{12} & \ctrl{1} & \ctrl{2} & \ctrl{3} & \ctrl{4} & \ctrl{5} & \ctrl{6} & \ctrl{7} & \ctrl{8} & \qw \\
        \lstick{\ket{x_2}} & \qw & \ctrl{5} & \qw & \qw & \qw & \qw & \ctrl{8} & \qw & \qw & \qw & \qw & \qw & \qw & \qw & \qw \\
        \lstick{\ket{x_3}} & \qw & \qw & \ctrl{5} & \qw & \qw & \qw & \qw & \ctrl{8} & \qw & \qw & \qw & \qw & \qw & \qw & \qw \\
        \lstick{\ket{x_4}} & \qw & \qw & \qw & \ctrl{5} & \qw & \qw & \qw & \qw & \ctrl{8} & \qw & \qw & \qw & \qw & \qw & \qw \\
        \lstick{\ket{x_5}} & \qw & \qw & \qw & \qw & \ctrl{5} & \qw & \qw & \qw & \qw & \ctrl{8} & \qw & \qw & \qw & \qw & \qw \\
        \lstick{\ket{0}_1} & \qw & \qw & \qw & \qw & \qw & \qw & \qw & \qw & \qw & \qw & \targ & \qw & \qw & \qw & \qw \\
        \lstick{\ket{0}_2} & \qw & \targ & \qw & \qw & \qw & \qw & \qw & \qw & \qw & \qw & \qw & \targ & \qw & \qw & \qw \\
        \lstick{\ket{0}_3} & \qw & \qw & \targ & \qw & \qw & \qw & \qw & \qw & \qw & \qw & \qw & \qw & \targ & \qw & \qw \\
        \lstick{\ket{0}_4} & \qw & \qw & \qw & \targ & \qw & \qw & \qw & \qw & \qw & \qw & \qw & \qw & \qw & \targ & \qw \\
        \lstick{\ket{0}_5} & \qw & \qw & \qw & \qw & \targ & \qw & \targ & \qw & \qw & \qw & \qw & \qw & \qw & \qw & \qw \\
        \lstick{\ket{0}_6} & \qw & \qw & \qw & \qw & \qw & \qw & \qw & \targ & \qw & \qw & \qw & \qw & \qw & \qw & \qw \\
        \lstick{\ket{0}_7} & \qw & \qw & \qw & \qw & \qw & \qw & \qw & \qw & \targ & \qw & \qw & \qw & \qw & \qw & \qw \\
        \lstick{\ket{0}_8} & \qw & \qw & \qw & \qw & \qw & \qw & \qw & \qw & \qw & \targ & \qw & \qw & \qw & \qw & \qw \gategroup{1}{13}{9}{15}{1.7em}{--}
    }}
    \caption{The quantum circuit for the $5$-bit Leaky ReLU function. Here, $\ket{x_i}$ represents input qubits, and $\ket{0}_j$ represents output qubits.}
    \label{fig:lReLU}
    \end{figure*}
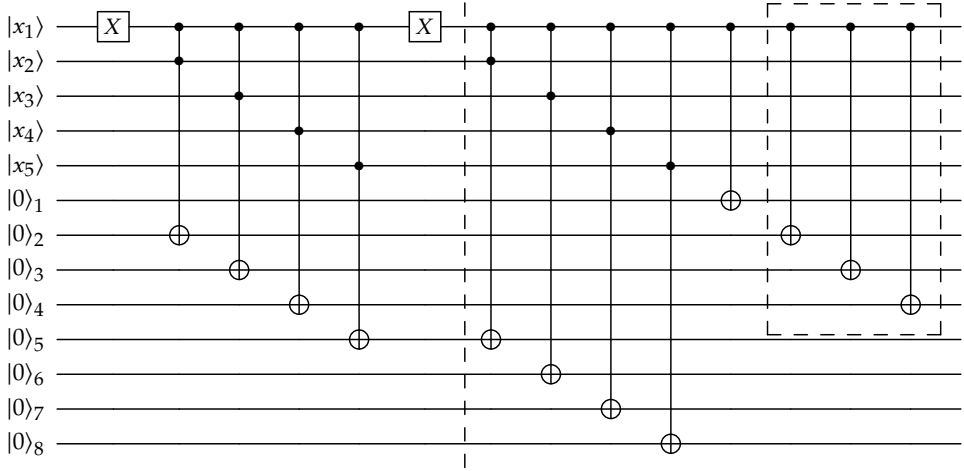

\begin{theorem}
    There exists a quantum circuit $C_n$ capable of implementing the $n$-bit Leaky ReLU function with a constant $T$-depth of $8$ without ancillary qubits. Additionally, the circuit depth and size are $O(\log n)$ and $O(n)$, respectively.
\end{theorem}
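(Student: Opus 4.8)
The plan is to mimic the ReLU construction of Theorem~\ref{the:ReLU}, splitting the circuit into two blocks — one handling non-negative inputs, one handling negative inputs — and to exploit that multiplication by $\alpha = 2^{-j}$ with $j \in \{3,4,5,6\}$ is merely a rightward bit shift, so that each branch only needs to copy input bits (or the sign bit) into the right output positions.

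First I would fix the output format. With $\alpha = 2^{-j}$ and an $n$-bit fixed-point input $\boldsymbol{x} = x_1\cdots x_n$ (sign bit $x_1$), the value $\max(\boldsymbol{x},\alpha\boldsymbol{x})$ fits in $m = n+j$ bits: $n$ bits for the signed integer part stored in $\ket{0}_1\cdots\ket{0}_n$, and $j$ bits for the fractional part stored in $\ket{0}_{n+1}\cdots\ket{0}_m$. The circuit is then the concatenation of two blocks. \textbf{Block A} (the non-negative branch) is exactly the ReLU circuit of Theorem~\ref{the:ReLU}: apply $X$ to $\ket{x_1}$, apply $n-1$ Toffoli gates sharing control $\ket{x_1}$ with distinct second controls $\ket{x_2},\dots,\ket{x_n}$, the $i$-th copying $x_{i+1}$ into the output qubit of weight $2^{\,n-1-i}$, then apply $X$ to $\ket{x_1}$ again; when $x_1=0$ this writes $\boldsymbol{x}$ into the output in the chosen format, and when $x_1=1$ it does nothing. \textbf{Block B} (the negative branch) uses no $X$ gates: it applies $n-1$ Toffoli gates sharing control $\ket{x_1}$ with distinct second controls $\ket{x_2},\dots,\ket{x_n}$, the $i$-th now copying $x_{i+1}$ into the output qubit of weight $2^{\,n-1-i-j}$ (the $j$-shifted location), together with a CNOT fan-out that broadcasts $x_1$ to the high-order output qubits to realize the two's-complement sign extension of $\alpha\boldsymbol{x}$; when $x_1=1$ this writes $\alpha\boldsymbol{x}$, and when $x_1=0$ it does nothing.

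Next I would verify correctness. The two blocks are gated by complementary conditions on $x_1$, so they never both write and never corrupt each other's output, even though they share output qubits; and Block A leaves the input register untouched (it flips $x_1$ an even number of times and only reads $x_2,\dots,x_n$), so Block B sees the correct input. For $x_1=0$ only Block A fires and the claim reduces to Theorem~\ref{the:ReLU}. For $x_1=1$ only Block B fires; here I would check the bit pattern of $\alpha\boldsymbol{x}=2^{-j}\boldsymbol{x}$ in two's complement — the shifted bits $x_2,\dots,x_n$ occupy weights $2^{\,n-1-j},\dots,2^{-j}$ while all higher-order bits (including the sign bit) equal $x_1=1$, which is precisely what the fan-out installs. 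For the true-form coding option the argument is identical but simpler, since only the sign bit and the shifted magnitude bits are copied and no sign extension is needed.

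Then I would bound the $T$-depth. In each block the Toffoli gates share the single control $\ket{x_1}$ and otherwise act on pairwise distinct qubits, so by the shared-control Toffoli technique of Theorem~\ref{the:ReLU} (Fig.~\ref{fig:4Toffoli}) each block has $T$-depth $4$; the $X$ gates are Clifford and the sign-extension fan-out consists only of CNOTs, contributing $T$-depth $0$ and at most $O(\log n)$ extra circuit depth (Lemma~\ref{lem:fanout}). Composing the two blocks gives $T$-depth $4+4 = 8$, overall depth $O(\log n)$ (dominated by the fan-out gates), size $O(n)$, and no ancillas. The main obstacle will be the negative branch: two's-complement multiplication by $2^{-j}$ is not just "write the shifted bits," since sign extension forces Block B to also broadcast $x_1$ to the top output bits, and the bookkeeping of $m = n+j$ and of true-form versus two's-complement coding must be handled carefully; the secondary point is confirming that sharing output qubits between the branches is harmless, which follows from the complementary $x_1$ conditions. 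Once these are settled the $T$-depth count is immediate; I would also remark, as the figure suggests, that Block B's sign-extension CNOTs can be folded into the same $x_1$ fan-out layer that the shared-control Toffoli construction already uses, so no $T$-depth is wasted.
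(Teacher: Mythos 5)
Your proposal is correct and follows essentially the same route as the paper: two complementary blocks of shared-control Toffoli gates (the ReLU-style block for $x_1=0$ and a $j$-shifted copy block for $x_1=1$), with CNOTs from $x_1$ handling the two's-complement sign extension, giving $T$-depth $4+4=8$. The indexing, the treatment of true form versus two's complement, and the complexity accounting all match the paper's proof.
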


\begin{proof}
    To implement the Leaky ReLU function, we output $x$ if the sign bit is $0$, and we output $\alpha x$ if the sign bit is $1$. It is worth noting that we choose $\alpha$ from the set $\{0.125, 0.0625, 0.03125, 0.015625\} = \{2^{-3},2^{-4},2^{-5},2^{-6}\}$. Therefore, $\alpha x$ can be easily obtained from $x$ by performing a bit shift. We denote the input $\boldsymbol{x} = x_1 x_2 \cdots x_n$, where $\boldsymbol{x} \in \{0,1\}^n$. Furthermore, there are $m$ output qubits denoted by $\ket{0}_1, \ket{0}_2, \dots , \ket{0}_m$. The value of $m$ is determined by the choice of $\alpha$.

    The quantum circuit is constructed in two parts. The first part resembles the circuit in Theorem~\ref{the:ReLU}: we start by applying an $X$ gate to $\ket{x_1}$, followed by $n-1$ Toffoli gates. The $i$-th Toffoli gate has $\ket{x_1}$ as the first control qubit, $\ket{x_{i+1}}$ as the second control qubit, and $\ket{0}_{i+1}$ as the target qubit. Finally, we apply another $X$ gate to $\ket{x_1}$. This first part of the circuit copies the state of $\ket{x_i}$ to $\ket{0}_i$ for $i=2,3,\dots, n$ when $x_1 = 0$, corresponding to the case where $x \geq 0$ and we should output $x$ itself.

    The second part of the circuit is constructed as follows: we apply $n-1$ Toffoli gates, where the $i$-th Toffoli gate has $\ket{x_1}$ as the first control qubit, $\ket{x_{i+1}}$ as the second control qubit, and $\ket{0}_{i+k}$ as the target qubit, with $k = 1 + \log \frac{1}{\alpha}$. Then, we apply a CNOT gate with $\ket{x_1}$ as the control qubit and $\ket{0}_1$ as the target qubit. If we use the True form, the circuit is complete. However, if we use the 2's complement, we need to apply $k-1$ additional CNOT gates, where the control qubit is $\ket{x_1}$ and the target qubits are $\ket{0}_2, \ket{0}_3, \dots, \ket{0}_{k}$ respectively. An example of a quantum circuit implementing the $5$-bit Leaky ReLU function with $\alpha = 0.125$ is depicted in Fig.~\ref{fig:lReLU}. In this figure, the circuit constructed in the first part is shown on the left side of the barrier. When using the True form, the gates inside the dashed box need to be omitted.

    To verify the correctness of the circuit, let's examine the value of $x_1$.

    When $x_1 = 0$, all quantum gates in the second part can be omitted. The circuit's correctness follows from the same proof as in Theorem~\ref{the:ReLU}.

    When $x_1 = 1$, all quantum gates in the first part can be omitted. The circuit will then copy $\ket{x_i}$ to $\ket{0}_{i+k}$ for $i=2,3,\dots,n$, where $k = 1 + \log \frac{1}{\alpha}$. Subsequently, we copy $\ket{x_1}$ to $\ket{0}_1$. If we use the True form, we already obtain $\alpha x$. However, if we use 2's complement, we need to flip $x_2,\dots,x_k$. Thus, we successfully store $\alpha x$ on the output qubits.

    To compute the $T$-depth of our quantum circuit, we observe that only the Toffoli gates contribute to it. By utilizing the method illustrated in Fig.~\ref{fig:4Toffoli} \cite{gokhale2021quantum}, we find that the $T$-depth of our circuit is 8. Similarly, we can determine that the circuit depth and size are both $O(\log n)$ and $O(n)$ respectively.
\end{proof}

\section{Implementation of activation function using Quantum Lookup Table}
\label{sec:QLUT}

The definition of a look-up table is a table that stores the output corresponding to all possible inputs. Rajiv Krishnakumar et al.~\cite{krishnakumar2022aq} proposed a method to implement a Quantum Look-up Table (QLUT) designed for quantum arithmetic functions. The QLUT offers a novel approach to data access and storage within quantum algorithms, facilitating the efficient retrieval and utilization of specific values or outcomes necessary for quantum arithmetic and other computational processes. This innovation is particularly advantageous for simplifying complex computational tasks.

To utilize QLUT for implementing activation functions, we initially determine the number of bits required to encode both the input and output of the function. For instance, if the input and output are encoded using $n$ and $m$ bits respectively, the function can be represented as $f(\boldsymbol{x}): \{0,1\}^n \to \{0,1\}^m$. Next, we generate a table containing all the possible inputs along with their corresponding outputs for the function. Finally, we employ this table to construct a quantum circuit $C$ such that:
\begin{equation*}
C\ket{\boldsymbol{x}}\ket{00\dots 0}\ket{00 \dots 0} = \ket{\boldsymbol{x}}\ket{f(\boldsymbol{x})}\ket{garbage_{\boldsymbol{x}}}.
\end{equation*}

Where $garbage_{\boldsymbol{x}}$ represents the garbage qubits. We utilize QLUT to implement various activation functions, including Sigmoid, Tanh, Swish, Softmax, ELU, and GELU. The input and output of these activation functions are floating-point numbers encoded using different numbers of bits: 8 bits, 16 bits, 32 bits, 64 bits, and 128 bits. We adhere to the IEEE 754 Standard \cite{kahan1996ieee} for encoding floating-point numbers, except for 8 bits. For 8-bit floating-point numbers, the encoding comprises one sign bit, four exponent bits, and three mantissa bits. In this study, both the input and output consistently employ the same encoding method. Therefore, the number of bits used to encode the input and output remains consistent.

To implement a specific activation function, such as the Sigmoid function, using QLUT, we first determine the number of bits to store the input and output, such as 32 bits. Subsequently, a table is generated containing $2^{32}$ entries, each entry representing the input and its corresponding output with 64 bits, fully describing the Sigmoid function. This table is then used to construct a quantum circuit.

Note that the choice of coding method only affects the table. One is free to use fixed-point numbers to generate the table and utilize our Qiskit code to construct the quantum circuit. Next, we introduce how to construct the circuit from the table.

\subsection{Quantum circuit implementation of quantum Lookup table}

We employ the SELECTSWAP network \cite{low2018trading} to implement the quantum Lookup Table with some minor revisions. The SELECTSWAP network comprises two parts: SELECT and SWAP. In the SELECT part, each term of the table requires a multi-controlled gate for implementation. The role of this gate is to prepare the output of each term to the output qubits only if the input state matches the input of that term. Naturally, the number of multi-controlled gates should match the number of terms in the table. If the input requires $n$ bits for storage, then there are $2^{n}$ multi-controlled gates, leading to a significant circuit depth. By leveraging the SWAP operator, we can mitigate the circuit depth while increasing the number of qubits. Let $l$ denote the number of swap qubits, where $0 < l < n$. For each multi-controlled gate, the $2^l$ outputs are prepared in {$n\cdot2^l$} qubits. Consequently, the number of multi-controlled gates can be reduced to $2^{n-l}$. After applying these gates, swap gates controlled by the swap qubits ensure that the correct output is directed to the output qubits. Further detailed explanation follows:

\textbf{SELECT:} The SELECT operator applies different sets of unitaries (here, the $X$ gate) based on the state of the control qubits. SELECT operator iterates through all possible addresses. Let's assume the $i$-th term of the table contains input $\boldsymbol{a}_i = a_{i,1}a_{i,2}\cdots a_{i,n} \in \{0,1\}^n$, and the corresponding output $\boldsymbol{b}_i = b_{i,1}b_{i,2}\cdots b_{i,n} \in \{0,1\}^n$. The SELECT operator is described as follows:
\begin{equation*}
    SELECT = \Sigma{i}\ket{\boldsymbol{a}_i}\bra{\boldsymbol{a}_i} \otimes X^{\boldsymbol{b_i}},
\end{equation*}
where $X^{\boldsymbol{b_i}} = \bigotimes_{j} X^{b_{i,j}}$. When implementing the $i$-th term of the table, two key points are considered in constructing the circuit: the control condition involves the input qubits in state $\ket{\boldsymbol{a}_i}$, and the $X$ gate is applied to the output qubits based on $\boldsymbol{b}_i$ and the control condition. To achieve this, we can utilize the quantum circuit of the quantum fan-out gate as depicted in Fig.~\ref{fig:F_8}, with the top CNOT gate replaced by an $n$-Toffoli gate and $X$ gates added on both sides of the $n$-Toffoli gate according to $\boldsymbol{a}_i$. Note that according to Fig.~\ref{fig:toffoli}, the $T$-depth of the Toffoli gate is $4$. Additionally, based on Lemma 7.2 of \cite{barenco1995elementary}, the $T$-depth of a $(k > 3)$-Toffoli gate is $16k - 32$.

\textbf{SWAP:} Assuming the number of swap qubits is $l$, where $0 < l < n$, there will be $2^l$ output registers. The function of the SWAP operator is to exchange the order of output registers so that the first register contains the quantum state of the register indexed by the state of the swap qubits. The SWAP operator is defined as follows:
\begin{equation*}
        SWAP[\ket{\boldsymbol{x}} \bigotimes_{i}\ket{\phi_{i}}_{i}] = \ket{\boldsymbol{x}}\ket{\phi_{\boldsymbol{x}}}_{0} \otimes \cdots .
\end{equation*}
The SWAP operator comprises controlled-SWAP gates \cite{low2018trading}. The crucial aspect is to handle the swap qubits one by one. An example of a quantum circuit with three swap qubits is depicted in Fig.~\ref{fig:SWAP}. It's worth noting that for many controlled-SWAP gates, the $T$-depth of the quantum circuit implementing them can be reduced to just $4$, given the condition that they share the same control qubit and have different target qubits for each SWAP gate. This reduction is achieved using the shared-control Toffoli technique~\cite{gokhale2021quantum} mentioned in Theorem \ref{the:ReLU}. An example of four controlled-SWAP gates is illustrated in Fig.~\ref{fig:4cswap}. Therefore, the $T$-depth of the SWAP operator is $4l$.

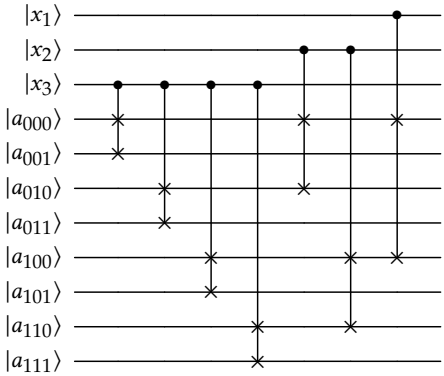
\begin{figure}[h]
\centering
\mbox{\Qcircuit @C=1.7em @R=1.2em @!R{
    \lstick{\ket{x_1}} & \qw & \qw & \qw & \qw & \qw & \qw & \ctrl{3} & \qw \\
    \lstick{\ket{x_2}} & \qw & \qw & \qw & \qw & \ctrl{2} & \ctrl{6} & \qw & \qw \\
    \lstick{\ket{x_3}} & \ctrl{1} & \ctrl{3} & \ctrl{5} & \ctrl{7} & \qw & \qw & \qw & \qw \\
    \lstick{\ket{a_{000}}} & \qswap \qwx[1] & \qw & \qw & \qw & \qswap \qwx[2] & \qw & \qswap \qwx[4] & \qw \\
    \lstick{\ket{a_{001}}} & \qswap & \qw & \qw & \qw & \qw & \qw & \qw & \qw \\
    \lstick{\ket{a_{010}}} & \qw & \qswap \qwx[1] & \qw & \qw & \qswap & \qw & \qw & \qw \\
    \lstick{\ket{a_{011}}} & \qw & \qswap & \qw & \qw & \qw & \qw & \qw & \qw \\
    \lstick{\ket{a_{100}}} & \qw & \qw & \qswap \qwx[1] & \qw & \qw & \qswap \qwx[2] & \qswap & \qw \\
    \lstick{\ket{a_{101}}} & \qw & \qw & \qswap & \qw & \qw & \qw & \qw & \qw \\
    \lstick{\ket{a_{110}}} & \qw & \qw & \qw & \qswap \qwx[1] & \qw & \qswap & \qw & \qw \\
    \lstick{\ket{a_{111}}} & \qw & \qw & \qw & \qswap & \qw & \qw & \qw & \qw \\
}}
\caption{The quantum circuit of SWAP operator for $\boldsymbol{x} \in \{0,1\}^3$.}
\label{fig:SWAP}
\end{figure}

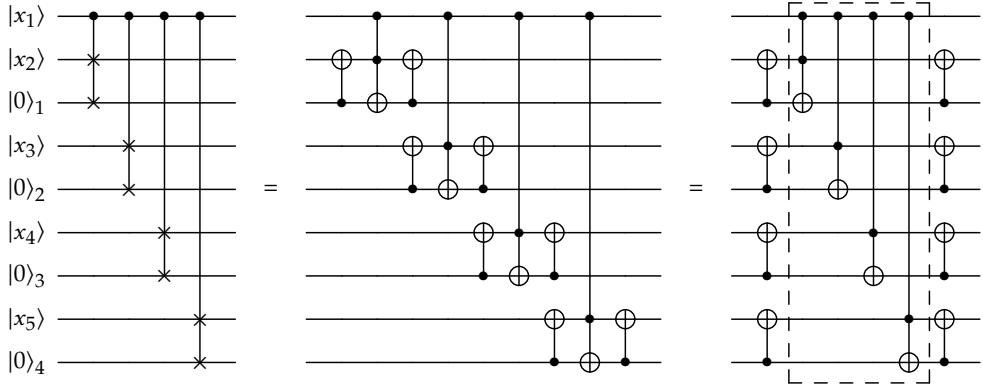
\begin{figure*}
\centering
\mbox{\Qcircuit @C=0.67em @R=1em @!R @!C{
    \lstick{\ket{x_1}} & \ctrl{1} & \ctrl{3} & \ctrl{5} & \ctrl{7} & \qw &&& \qw & \ctrl{1} & \qw & \ctrl{3} & \qw & \ctrl{5} & \qw & \ctrl{7} & \qw & \qw &&& \qw & \ctrl{1} & \ctrl{3} & \ctrl{5} & \ctrl{7} & \qw & \qw \\
    \lstick{\ket{x_2}} & \qswap \qwx[1] & \qw & \qw & \qw & \qw &&& \targ & \ctrl{1} & \targ & \qw & \qw & \qw & \qw & \qw & \qw & \qw &&& \targ & \ctrl{1} & \qw & \qw & \qw & \targ & \qw \\
    \lstick{\ket{0}_1} & \qswap & \qw & \qw & \qw & \qw &&& \ctrl{-1} & \targ & \ctrl{-1} & \qw & \qw & \qw & \qw & \qw & \qw & \qw &&& \ctrl{-1} & \targ & \qw & \qw & \qw & \ctrl{-1} & \qw \\
    \lstick{\ket{x_3}} & \qw & \qswap \qwx[1] & \qw & \qw & \qw &&& \qw & \qw & \targ & \ctrl{1} & \targ & \qw & \qw & \qw & \qw & \qw &&& \targ & \qw & \ctrl{1} & \qw & \qw & \targ & \qw \\
    \lstick{\ket{0}_2} & \qw & \qswap & \qw & \qw & \qw &\push{=}&& \qw & \qw & \ctrl{-1} & \targ & \ctrl{-1} & \qw & \qw & \qw & \qw & \qw &\push{=}&& \ctrl{-1} & \qw & \targ & \qw & \qw & \ctrl{-1} & \qw \\
    \lstick{\ket{x_4}} & \qw & \qw & \qswap \qwx[1] & \qw & \qw &&& \qw & \qw & \qw & \qw & \targ & \ctrl{1} & \targ & \qw & \qw & \qw &&& \targ & \qw & \qw & \ctrl{1} & \qw & \targ & \qw \\
    \lstick{\ket{0}_3} & \qw & \qw & \qswap & \qw & \qw &&& \qw & \qw & \qw & \qw & \ctrl{-1} & \targ & \ctrl{-1} & \qw & \qw & \qw &&& \ctrl{-1} & \qw & \qw & \targ & \qw & \ctrl{-1} & \qw \\
    \lstick{\ket{x_5}} & \qw & \qw & \qw & \qswap \qwx[1] & \qw &&& \qw & \qw & \qw & \qw & \qw & \qw & \targ & \ctrl{1} & \targ & \qw &&& \targ & \qw & \qw & \qw & \ctrl{1} & \targ & \qw \\
    \lstick{\ket{0}_4} & \qw & \qw & \qw & \qswap & \qw &&& \qw & \qw & \qw & \qw & \qw & \qw & \ctrl{-1} & \targ & \ctrl{-1} & \qw &&& \ctrl{-1} & \qw & \qw & \qw & \targ & \ctrl{-1} & \qw \gategroup{1}{22}{9}{25}{0.9em}{--}
}}
\caption{The parallel implementation of $4$ controlled swap gates.}
\label{fig:4cswap}
\end{figure*}

\begin{table*}
\caption{The $T$-depth of the quantum circuit implementing Sigmoid, SoftMax, Tanh, Swish, ELU, and GELU using QLUT with varying ancilla counts (\#ancilla) is shown. The number of input qubits ranges from $\{8,16,32,64,128\}$.}
\bgroup
\renewcommand{\arraystretch}{1.3}
\label{table:cost}
\centering
\resizebox{0.9\linewidth}{!}{
\begin{tabular}{c|c|c|c|c|c|c|c}
\hline \hline
\multicolumn{8}{c}{Number of input qubits = 8} \\ \hline
\#ancilla & 16 & 32 & 64 & 128 & 256 & 512 & 1024 \\ \hline
$T$-depth & 10244 & 4104 & 1548 & 528 & 148 & 40 & 28 \\ \hline \hline
\multicolumn{8}{c}{Number of input qubits = 16} \\ \hline
\#ancilla & 64 & 256 & 1024 & 4096 & 16384 & 65536 & 2.62 $\times 10^5$\\ \hline
$T$-depth & 3.15 $\times 10^6$ & 6.55 $\times 10^5$ & 1.31 $\times 10^5$ & 24608 & 4136 & 560 & 72\\ \hline \hline
\multicolumn{8}{c}{Number of input qubits = 32} \\ \hline
\#ancilla & 512 & 8192 & 1.31 $\times 10^5$ & 2.1 $\times 10^6$ & 3.36 $\times 10^7$ & 5.37 $\times 10^8$ & 8.59 $\times 10^9$ \\ \hline
$T$-depth &1.12 $\times 10^{11}$ & 5.91 $\times 10^9$ & 3.2 $\times 10^8$ & 1.47 $\times 10^7$ & 6.55 $\times 10^5$ & 24672 & 624 \\ \hline \hline
\multicolumn{8}{c}{Number of input qubits = 64} \\ \hline
\#ancilla & 16384 & 4.19 $\times 10^{6}$ & 1.07 $\times 10^{9}$ & 2.75 $\times 10^{11}$ & 7.04 $\times 10^{13}$ & 1.81 $\times 10^{16}$ & 4.61 $\times 10^{18}$ \\ \hline
$T$-depth & 6.23 $\times 10^{19}$ &2.07 $\times 10^{17}$ &6.69 $\times 10^{14}$ &2.06 $\times 10^{12}$ &5.91 $\times 10^{9}$ &1.47 $\times 10^{7}$ &24800\\ \hline \hline
\multicolumn{8}{c}{Number of input qubits = 128} \\ \hline
\#ancilla & 8.39 $\times 10^{6}$ & 5.5 $\times 10^{11}$ & 3.6 $\times 10^{16}$ & 2.36 $\times 10^{21}$ & 1.55 $\times 10^{26}$ & 1.01 $\times 10^{31}$ & 6.65 $\times 10^{35}$ \\ \hline
$T$-depth &9.14 $\times 10^{36}$ &1.19 $\times 10^{32}$ &1.51 $\times 10^{27}$ &1.83 $\times 10^{22}$ &2.07 $\times 10^{17}$ &2.06 $\times 10^{12}$ &1.47 $\times 10^{7}$ \\ \hline \hline

\end{tabular}
}
\egroup
\end{table*}

\textbf{SELECTSWAP:} The SELECTSWAP circuit is a hybrid of the SELECT and SWAP circuits. Assuming the number of input qubits and output qubits is $n$, and the number of swap qubits is $l$, the SELECT operator consists of $2^{n-l}$ steps. In each step, $2^l$ $n$-qubit registers are prepared to store all possible outputs based on all possible states of the $l$ swap qubits. In the SWAP operator, the register indexed by the quantum state of the swap qubits is placed in the first register. Consequently, the first register stores the desired output. The $T$-depth of circuit is $(2^{n-l})(16(n-l)-32) + 4l$ according to the discussion above. Note that the garbage qubits used in the SWAP operator could be replaced by dirty ancillary qubits \cite{low2018trading}.

\subsection{Analysis and Comparison of QLUT}

Note that the $T$-depth of the quantum circuit for QLUT is independent of the function it needs to implement; it is solely determined by the number of input qubits and swap qubits. Thus, the $T$-depth of the Sigmoid, Tanh, Swish, Softmax, ELU, and GELU functions is presented in Table~\ref{table:cost}. Here, we consider both the trade-off between qubit number and implementation accuracy, and between space and time. Since errors arise solely from coding accuracy, having more input qubits leads to a lower error rate. Increasing the number of swap qubits can significantly reduce the $T$-depth while also increasing the number of ancillary qubits. Users can choose the most suitable parameters based on their specific situation. For instance, a quantum circuit with a $T$-depth of 148 and 256 ancillary qubits could be utilized to implement activation functions such as Sigmoid and Tanh, where the input is an 8-bit floating-point number. The discussion regarding the opportunity cost of ancillary qubits explains that in this case, it is worthwhile to increase the number of ancillary qubits to reduce the $T$-depth \cite{gidney2018halving}.

\begin{table}[h]
\centering
\caption{The degree of the polynomial required to approximate the sigmoid function with the same error rate $\epsilon$ as using QLUT to approximate the sigmoid.}
\bgroup
\renewcommand{\arraystretch}{1.3}
\label{table:degree}
\centering
\resizebox{0.48\textwidth}{!}{
\begin{tabular}{c|c|c}
\hline
input qubit number & $\epsilon$ & degree of polynomial  \\ \hline \hline
$n=8$ & $1/2$ & 137 \\ \hline
$n=16$ & $1/2^{7}$  & 297 \\ \hline
$n=32$ & $1 / 2^{21}$ & 678 \\ \hline
$n=64$ & $1 / 2^{50}$ & 1469 \\ \hline
$n=128$ & $1 / 2^{112}$ & 3158 \\ \hline

\end{tabular}
}
\egroup
\end{table}

We illustrate the error rate of our activation function implementation and compare it with other methods. For simplicity, we define the error rate $\epsilon$ as follows, where $|x| < 15/4$:
\begin{equation*}
    |\hat{f}(x)-f(x)| < \epsilon.
\end{equation*}
Here, $\hat{f}(x)$ represents the function used to approximate $f(x)$. The error rate of using $8,16,32,64,$ and $128$-bit floating-point numbers to represent $|x| < 15/4$ is computed and shown in Table~\ref{table:degree}. We calculate the degree of polynomial needed to approximate the sigmoid function with the same error rate using Lemma 2 in \cite{livni2014computational}, as depicted in Table~\ref{table:degree}. The table demonstrates that even with a high error rate, the required degree remains relatively high, resulting in a high quantum cost for implementing sigmoid using polynomial approximation methods such as Quantum Singular Value Transformation (QSVT) \cite{gilyen2019quantum}.

\section{Conclusion}
\label{sec:concl}

This paper focuses on implementing activation functions with lower $T$-depth quantum circuits, yielding significant breakthroughs. 
Specifically, we achieve constant $T$-depth circuits for both ReLU and Leaky ReLU functions, regardless of the number of input qubits, $n$. Even when restricting the connectivity topology between qubits to a 2D grid, the implementation of ReLU function remains constant in $T$-depth. 
For other activation functions, such as Sigmoid, Tanh, ELU, Swish, SoftMax, and GELU, which are not conveniently implemented directly with arithmetic circuits, we employ quantum Lookup tables. 
We have balanced the number of input qubits against implementation precision and the trade-off between circuit $T$-depth and ancilla count to enhance the applicability of our results. Our method demonstrates advantages through a specific analysis of $T$-depth comparisons.

Looking ahead, we anticipate leveraging our activation function implementations in the implementation of quantum neural networks, thereby paving the way for future research directions in this domain.

\bibliographystyle{abbrv}
\bibliography{ref.bib}
\end{document}